\newcommand{\old}[1]{}
\newtheorem{observation}{Observation}
\begin{document}

\title{Broadcast and minimum spanning tree with $o(m)$ messages in the asynchronous CONGEST model \thanks{Funded with an NSERC grant.
}
}
\titlerunning{Broadcast and MST with $o(m)$ messages}        

\author{Ali Mashreghi         \and
        Valerie King 
}


\institute{A. Mashreghi \at
              Department of Computer Science, University of Victoria, BC, Canada\\
              \email{ali.mashreghi87@gmail.com}           
           \and
           V. King \at
              Department of Computer Science, University of Victoria, BC, Canada
}

\date{Received: date / Accepted: date}
\maketitle

\begin{abstract}
We provide the first asynchronous distributed algorithms to compute broadcast and minimum spanning tree with $o(m)$ bits of communication, in a sufficiently dense graph with $n$ nodes and $m$ edges. For decades, it was believed that $\Omega(m)$ bits of communication are required for any algorithm that constructs a broadcast tree. In 2015, King, Kutten and Thorup showed that in the KT1 model where nodes have initial knowledge of their neighbors' identities it is possible to construct MST in $\tilde{O}(n)$ messages in the synchronous CONGEST model. In the CONGEST model messages are of size $O(\log n)$. However, no algorithm with $o(m)$ messages were known for the asynchronous case. Here, we provide an algorithm that uses $O(n^{3/2} \log^{3/2} n)$ messages to find MST in the asynchronous CONGEST model. Our algorithm is randomized Monte Carlo and outputs MST with high probability. We will provide an algorithm for computing a spanning tree with $O(n^{3/2} \log^{3/2} n)$ messages.  Given a spanning tree, we can compute MST with $\tilde{O}(n)$ messages. 
\keywords{Distributed Computing \and Minimum Spanning Tree \and Broadcast Tree}
\end{abstract}

\section{Introduction}
We consider a distributed network as an undirected graph with $n$ nodes and $m$ edges, and the problem of finding a spanning tree and a minimum spanning tree (MST) with efficient communication. That is, we require that every node in the graph learns exactly the subset of its incident edges which are in the spanning tree or MST, resp.  
 A spanning tree enables a message to be broadcast from one node to all other nodes with only $n-1$ edge traversals. In a sensor or ad hoc network where the weight of a link between nodes reflects the amount of energy required to transmit a message along the link \cite{kutten2015complexity}, the minimum spanning tree (MST) provides an energy efficient means of broadcasting. The problem of finding a spanning tree in a network has been studied for more than three decades, since it is the building block of many other fundamental problems such as \emph{counting}, \emph{leader election}, and \emph{deadlock resolution} \cite{awerbuch1987optimal}. 

 A spanning tree can be constructed by a simple breadth-first search from a single node using $m$ bits of communication.  The tightness of this communication bound was  a ``folk theorem'',  according to Awerbuch, Goldreich, Peleg and Vainish \cite{awerbuch1990trade}. Their 1990 paper defined the KT1 model where nodes have unique IDs and know only their neighbors. It showed, for a limited class of algorithms, a lower bound of $\Omega(m)$ messages in a synchronous KT1 network. In 2015, Kutten et al. \cite{kutten2015complexity} proved a lower bound for general randomized algorithms with $O(\log n)$ bit messages, in the KT0 model, where nodes do not know their neighbors. In 2015, King, Kutten, and Thorup  gave the first distributed algorithm (``KKT'') with $o(m)$ communication to build a broadcast tree and MST in the KT1 model. They devised Monte Carlo algorithms in the synchronous KT1 model  with $\tilde{O}(n)$ communication \cite{king2015construction}. This paper and a followup paper \cite{mashreghi2017time} left open the problem of whether a $o(m)$ bit communication algorithm in the asynchronous model was possible, for either the spanning tree or MST problem, when nodes know their neighbors' IDs.

In an asynchronous network, there is no global clock. All processors may wake up at the start and send messages, but further actions by a node are event-driven, i.e., in response to messages received.   The pioneer work of Gallager, Humblet, and Spira \cite{gallager1983distributed} (``GHS'') presented an asynchronous protocol for finding the MST in the CONGEST model, where messages are of size $O(\log n)$.  GHS requires $O(m + n\log n)$ messages and $O(n \log n)$ time if all nodes are awakened simultaneously. 
Afterwards, researchers worked on improving the time complexity of MST algorithms in the CONGEST model
but the message complexity remained $\Omega{(m)}$. 
In this paper, we provide the first algorithm in the KT1 model which uses $o(m)$ bits of communication for finding a spanning tree in an asynchronous network, specifically we show the following:

\begin{theorem} \label{theorem}
Given any network of $n$ nodes where all nodes awake at the start, a  spanning tree and a minimum spanning tree can be built with $O(n^{3/2} \log^{3/2} n)$ messages  in the asynchronous  KT1 CONGEST model, with high probability. 
\end{theorem}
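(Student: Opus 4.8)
The plan is to split the theorem into (i) building \emph{some} spanning tree with $O(n^{3/2}\log^{3/2}n)$ messages and (ii) turning a known spanning tree into the MST with only $\tilde O(n)$ extra messages. Part (ii) is the easier half: once a spanning tree $T$ is available, use it as a backbone to elect a leader and count the nodes, then run $O(\log n)$ Bor\r{u}vka phases in which the leader explicitly sequences the phases along $T$ (so no synchronizer is needed), and in each phase let every current fragment find its minimum-weight outgoing edge by the message-efficient primitive of King--Kutten--Thorup \cite{king2015construction}: assign the edges shared random ranks, binary-search on the ranks, and answer each query by an XOR-convergecast over the fragment's tree so that internal edges cancel pairwise and no node needs to know the fragment's membership. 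This costs $\tilde O(|F|)$ messages per fragment $F$, hence $\tilde O(n)$ per phase and $\tilde O(n)$ overall, so all the real difficulty is in part (i).

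For part (i), fix a threshold $\tau=\tilde\Theta(\sqrt n)$ and call a node \emph{light} if its degree is at most $\tau$. Have every light node announce ``I am light'' to its (at most $\tau$) neighbours, which costs $O(n\tau)=\tilde O(n^{3/2})$ messages and lets every node decide locally which of its incident edges are \emph{cheap}, i.e.\ have a light endpoint; note the total number of cheap edges is at most $n\tau$. Now run the Gallager--Humblet--Spira protocol \cite{gallager1983distributed}, which is natively asynchronous, but only over the cheap edges; this costs $O(\#\text{cheap edges}+n\log n)=\tilde O(n^{3/2})$ messages and produces a partition of $V$ into connected fragments, each with a rooted tree, with the crucial property that every edge of $G$ joining two different fragments has \emph{both} endpoints heavy (any edge touching a light node is cheap, hence already internal to a fragment).

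It remains to connect the fragments using only heavy--heavy edges, of which there may be $\Theta(n^2)$. In parallel with the above, let each node become a \emph{center} independently with probability $q=\tilde\Theta(1/\sqrt n)$, so there are $\tilde O(\sqrt n)$ centers w.h.p., and let each heavy node probe $\tilde\Theta(\sqrt n)$ uniformly random incident edges; since a heavy node has more than $\tau$ neighbours, with high probability one of its probes hits a center, and it attaches itself to that center. This costs $\tilde O(n\cdot\sqrt n)=\tilde O(n^{3/2})$ messages and, after contracting each heavy node onto its center, collapses the picture to at most $\tilde O(\sqrt n)$ ``super-fragments'' that still need merging (fragments with no heavy node are already final). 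Finally merge the $\tilde O(\sqrt n)$ super-fragments by running GHS again on top of their known spanning trees but with the KKT find-outgoing-edge primitive in place of edge-testing, so that a super-fragment of size $s$ spends only $\tilde O(s)$ messages to expose an outgoing edge; with only $\tilde O(\sqrt n)$ fragments the GHS coordination overhead is negligible, giving $\tilde O(n)$ per Bor\r{u}vka level and $\tilde O(n)$ in total, so the whole construction is dominated by the light-announcement, cheap-GHS, and probing steps, each $O(n^{3/2}\log^{3/2}n)$.

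The main obstacle is asynchrony at every point where fragments merge. Without a global clock, a generic synchronizer would reintroduce $\Omega(m)$ messages, so each merge must be carried out by GHS-style handshaking — fragment levels, distinguishing ``merge'' from ``absorb'', queuing pending requests — now on the cheap subgraph or on the small super-fragment graph, and one must establish an amortization bound: stale fragment identities and in-flight merges force each node to re-probe or re-invoke the primitive only $\tilde O(1)$ times, so the $O(n^{3/2}\log^{3/2}n)$ budget survives asynchrony. Proving this amortization together with correctness — every non-final fragment eventually discovers an outgoing edge when one exists, and the union of chosen edges is acyclic and spanning — is where essentially all the work lies. The ``$o(m)$'' phrasing in the abstract is then just the remark that $n^{3/2}\log^{3/2}n=o(m)$ when $m=\omega(n^{3/2}\log^{3/2}n)$, while for sparser graphs the stated bound holds trivially because then $m=O(n\tau)$ and the cheap-edge phase already accounts for the entire cost.
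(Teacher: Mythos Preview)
Your Part~(ii) is essentially the paper's approach (Section~4): use the spanning tree to sequence $O(\log n)$ Bor\r{u}vka rounds, and in each round let every fragment call the KKT $FindMin$ primitive, for $\tilde O(n)$ messages in total.

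Your Part~(i), however, takes a genuinely different route from the paper and has a real gap at exactly the point the paper singles out as the central obstacle of the asynchronous model. You propose to first run GHS restricted to the ``cheap'' edges (those with a light endpoint), and only afterwards deal with heavy--heavy edges via centers. For this to yield the invariant you need---every inter-fragment edge is heavy--heavy---each cheap-GHS fragment must be able to detect that it has \emph{no more} outgoing cheap edges. But a heavy node learns that an incident edge is cheap only when the light-announcement arrives on it, and in the asynchronous model it can never conclude that no further such announcements are coming. The paper states this explicitly: ``in the asynchronous model, a node, not hearing from its neighbor, does not know when to conclude that its neighbor is of higher degree.'' Hence your cheap-edge GHS has no sound termination rule, and the clean two-stage decomposition ``finish all cheap edges, then handle heavy ones'' is precisely what asynchrony forbids. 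Your final paragraph places ``essentially all the work'' in the amortization of asynchronous merges, but the more basic failure is earlier: without a spanning tree already in hand there is no global event that signals the cheap phase is done, and manufacturing such an event is the whole problem.

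The paper avoids this by \emph{not} separating the phases. It grows a single tree $T$ from one leader; in each phase it runs $Expand$, then samples $O(\log n)$ random outgoing edges via $FindAny$. If any sample hits a high-degree node the phase ends; otherwise w.h.p.\ at least half the outgoing edges go to low-degree nodes, and the leader calls $ApproxCut$ to estimate the cut size and a $ThresholdDetection$ procedure to wait only until a \emph{constant fraction} of the pending low-degree announcements have arrived---never all of them. This interleaving, together with the $ApproxCut$/threshold machinery, is the paper's main new idea, and it is what your plan is missing. (Your random ``centers'' are the same device as the paper's ``star nodes''; the difference is that the paper has stars broadcast and heavy nodes wait, rather than heavy nodes probe, but either direction works for that sub-step.) Your super-fragment merge is a secondary concern: with only $\tilde O(\sqrt n)$ fragments the budget could perhaps absorb a naive sequential merge, but the paper itself only knows how to run GHS-with-$FindMin$ once a spanning tree is available to synchronize the ranks, so that step too is unresolved in your outline.
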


\subsection{Techniques:}  ~~Many distributed algorithms to find an MST use the Boruvka method:  Starting from the set of isolated nodes, a forest of  edge disjoint rooted trees which are subtrees of the MST are maintained. The algorithms runs in phases:  In a phase, in parallel, each tree $A$ finds a minimum weight {\it outgoing edge}, that is, one with exactly one endpoint in $A$ and its other endpoint in some other tree B. Then the outgoing edge is inserted to create the ``merged'' tree containing the nodes of $A$ and $B$. In what seems an inherently synchronous process, every tree (or a constant fraction of the trees) participates in some merge, the number of trees is reduced by a constant factor per phase, and $O(\log n)$ phases suffice to form a single tree. \cite{gallager1983distributed,awerbuch1987optimal,king2015construction,mashreghi2017time}.

 The KKT paper introduced procedures $FindAny$ and $FindMin$ which can find any or the minimum outgoing edge leaving the tree, respectively. These require ${O}(|T|)$ messages and $\tilde{O}(|T|)$, resp., where $|T|$ is the number of nodes in the tree $T$ or a total of $\tilde{O}(n)$ per phase. As this is done synchronously in KKT, only $O(\log n)$ phases are needed, for a total number of only ${O}(n \log n)$ messages to build a spanning tree.

While $FindAny$ and $FindMin$ are asynchronous procedures, the  Boruvka approach of  \cite{king2015construction} does not seem to work in an asynchronous model  with $o(m)$ messages, as it does not seem possible to prevent only one tree from growing, one node at a time, while the other nodes are delayed, for a cost of $O(n^2)$ messages. The asynchronous GHS also uses $O(\log n)$ phases to merge trees in parallel,  but it is able to synchronize the growth of the trees by assigning a \textit{rank} to each tree. A tree which finds a minimum outgoing edge waits to merge until the tree it is merging with is of equal or higher rank. 
 The GHS algorithm subtly avoids traversing the whole tree until a minimum weight outgoing edge to an appropriately ranked tree is found. This method seems to require communication over all edges in the worst case.  

Asynchrony precludes approaches that can be used in the synchronous model. For example, in the synchronous model, if nodes of low degree send messages to all their neighbors, in one round all nodes learn which of their neighbors do not have low degree, and therefore they can construct the subgraph of higher degree nodes. In the asynchronous model, a node, not hearing from its neighbor, does not know when to conclude that its neighbor is  of higher degree.

 The technique for building a spanning tree in our paper is very different from the technique in \cite{king2015construction} or \cite{gallager1983distributed}.
We grow one tree $T$ rooted at one preselected {\it leader} in phases.  (If there is no preselected leader, then this may be done from a small number of randomly self-selected nodes.)
 Initially, each node selects itself with probability $1/\sqrt{n \log n}$ as a \emph{star node}. (We use $\log n$ to denote $\log_2 n$.) This technique is inspired from \cite{Elkin:2017:DES:3055399.3055452}, and provides a useful property that every node whose degree is at least $\sqrt{n} \log^{3/2} n$ is adjacent to a star node with high probability. Initially, star nodes  (and low-degree nodes) send out messages to all of their neighbors. Each high-degree node which joins $T$ waits until it hears from a star node and then invites it to join $T$.  In addition, when low-degree and star nodes join $T$, they invite their neighbors to link to $T$ via their incident edges.  Therefore, with high probability, the following invariant for $T$ is maintained as $T$ grows: \\

\noindent
{\bf Invariant:} $T$ includes all neighbors of any star or low-degree node in $T$, as well.  Each high-degree node in $T$ is adjacent to a star node in $T$. \\

 The challenge is for high-degree nodes in $T$ to find neighbors outside $T$. If in each phase, an outgoing edge from a high-degree node in $T$ to a high-degree node $x$ (not in $T$) is found and $x$ is invited to join $T$, then $x$'s adjacent star node (which must lie outside $T$ by the Invariant) is also found and invited to join. As the number of star nodes is  $O(\sqrt{ n} / \log^{1/2} n)$, this number also bounds the number of such phases. The difficulty is that there is no obvious way to find an outgoing edge to a high degree node because, as mentioned above, in an asynchronous network, a high degree node has no apparent way to determine if its neighbor has high degree without receiving a message from its neighbor.
 
Instead, we relax our requirement for a phase. With each phase either \textbf{(A)} A high-degree node (and star node) is added to $T$ or \textbf{(B)} $T$ is expanded so that the number of outgoing edges to low-degree nodes is reduced by a constant factor. As there are no more than $O(\sqrt{n} / \log^{1/2} n)$ phases of type \textbf{A}  and no more than $O(\log n)$ phases of type \textbf{B} between each type \textbf{A} phase, there are a total of $O(\sqrt{n} \log^{1/2} n)$ phases before all nodes are in $T$.
The key idea for implementing a phase of type \textbf{B} is that the tree $T$ waits until its nodes have heard enough messages passed by low-degree nodes over outgoing edges before initiating an expansion.
 The efficient implementation of a phase, which uses only $O(n\log n)$ messages, requires a number of tools which are described in the preliminaries section. 

Once a spanning tree is built, we use it as a communication network while finding the MST. This enables us to ``synchronize'' a modified GHS which uses $FindMin$ for finding minimum outgoing edges, using a total of $\tilde{O}(n)$ messages. \par

\subsection{Related work:}~
The Awerbuch, Goldreich, Peleg and Vainish \cite{awerbuch1990trade} lower bound on the number of messages holds only for (randomized) algorithms  where messages may contain  a constant number of IDs, and IDs are processed by comparison only and for general deterministic algorithms, where ID's are drawn from a very large size universe. 

Time to build an MST in the CONGEST model has been explored in several papers. Algorithms include, in the asynchronous KT0 model,
\cite{gallager1983distributed,awerbuch1987optimal,faloutsos1995optimal,singh1995highly}, and in the synchronous KT0 model, \cite{kutten1995fast,garay1998sublinear,elkin2004faster,Khan:2006:FDA:2136050.2136076}. Recently, in the synchronous KT0 model,  Pandurangan gave a \cite{pandurangan2017time}  $\tilde{O}(D + \sqrt{n})$ time and  $\tilde{O}(m)$ message randomized algorithm, which Elkin improved by logarithmic factors with a deterministic algorithm  \cite{elkin2017simple}. The time complexity to compute spanning tree in the algorithm of  \cite{king2015construction} is $O(n \log n)$ which was improved to $O(n)$ in \cite{mashreghi2017time}.

Lower bounds on time for approximating the minimum spanning tree has been proved in the synchronous KT0 model In \cite{elkin2006unconditional,sarma2012distributed} . Kutten et al. \cite{kutten2015complexity} show an $\Omega(m)$ lower bound on message complexity for randomized general algorithms in the KT0 model. \par 
$FindAny$ and $FindMin$ which appear in the KKT algorithms build on ideas for sequential dynamic connectivity in \cite{kapron2013dynamic}. A sequential dynamic $ApproxCut$ also appeared in that paper \cite{kapron2013dynamic}. Solutions to the sequential linear sketching problem for connectivity \cite{ahn2012graph} share similar techniques but require a more complex step to verify when a candidate edge name is an actual edge in the graph, as the edges names are no longer accessible once the sketch is made (See Subsection \ref{s:findany}).

The  \textit{threshold detection} problem was introduced by Emek and Korman \cite{emek2010efficient}. It assumes that there is a rooted spanning tree $T$ where events arrive online at $T$'s nodes. Given some threshold $k$, a termination signal is broadcast by the root if and only if the number of events exceeds $k$.  We use a naive solution of a simple version of the problem here.

A synchronizer, introduced by Awerbuch \cite{awerbuch1985complexity} and studied in \cite{awerbuch1990network,awerbuch2007time,peleg1987optimal,elkin2008synchronizers}, is a general technique for simulating a synchronous algorithm on an asynchronous network using communications along a spanning tree.  To do this, the spanning tree must be built first. Using a general synchronizer imposes an overhead of messages that affect \emph{every single step} of the synchronous algorithm that one wants to simulate, and would require more communication than our special purpose method of using our spanning tree to synchronize the modified GHS.

\subsection{Organization:} ~Section \ref{modelsec} describes the model. Section \ref{STconstruction} gives the spanning tree algorithm for the case of a connected network and a single known leader. Finally, Section \ref{MSTconstruction} provides the MST algorithm. Section \ref{disconCase} provides the algorithm for computing a minimum spanning \emph{forest} in disconnected graphs.

\section{Preliminaries}
\label{modelsec}

\subsection{Model:}
 ~~Let  $c \geq 1$ be any constant. The communications network is the undirected graph $ G=(V,E)$ over which a spanning tree or MST will be found.  Edge weights are integers in $[1, n^c]$.  IDs are assigned uniquely by the adversary from $[1,n^c]$. All nodes have knowledge of $c$ and $n$ which is an upper bound on $|V|$ (number of nodes in the network) within a \emph{constant} factor. All nodes know their own ID along with the ID of their neighbors (KT1 model) and the weights of their incident edges. Nodes have no other information about the network. e.g., they do not know $|E|$ or the maximum degree of the nodes in the network.  Nodes can only send direct messages to the nodes that are adjacent to them in the network.  If the edge weights are not unique they can be made unique by appending the ID of the endpoints to its weight, so that the MST is unique.  Nodes can only send direct messages to the nodes that are adjacent to them in the network. Our algorithm is described in the CONGEST model in which each message has size $O(\log n)$.  Its time is trivially bounded by the total number of messages.  The  KT1  CONGEST model has been referred to as the ``standard  model''\cite{awerbuch1990trade}.

Message cost is the sum over all edges of the number of messages sent over each edge during the execution of the algorithm. 
If a message is sent it is eventually received, but the adversary controls the length of the delays and there is no guarantee that messages sent by the same node will be received in the order they are sent. There is no global clock. All nodes awake at the start of the protocol simultaneously. After awaking and possibly sending its initial messages, a processor acts only in response to receiving messages.

We say a network ``finds'' a subgraph  if at the end of the distributed algorithm, every node knows exactly which  of its incident edges in the network are part of the subgraph.
The algorithm here is Monte Carlo, in that it succeeds with probability $1-n^{-c''}$ for any constant $c''$ (``w.h.p.'').

We initially assume there is a special node (called \emph{leader}) at the start and the graph is connected. These assumptions are dropped in the algorithm we provide for disconnected graphs in the full version of the paper.

\subsection{Definitions and Subroutines:}
\label{defs}
~~$T$ is initially a tree containing only the leader node. Thereafter, $T$ is a tree rooted at the leader node.
We use the term  {\it outgoing edge} from  $T$ to mean an edge with exactly one endpoint in $T$.   An outgoing edge is described as if it is directed;  it is {\it from} a node in $T$ and {\it to} a node not in $T$ (the ``external'' endpoint).

The algorithm uses the following subroutines and definitions:
\begin{itemize}

\item
$Broadcast(M)$: Procedure whereby the node $v$ in $T$ sends message $ M $ to its children and its children broadcast to their subtrees. 

\item
$Expand$: A procedure for adding nodes to $T$ and preserving the Invariant after doing so.

\item $ FindAny$:   Returns to the leader an outgoing edge chosen uniformly at random with probability 1/16, or else it returns $\emptyset$. The leader then broadcasts the result.  $FindAny$ requires $O(n)$ messages. We specify $FindAny(E')$ when we mean that the outgoing edge must be an outgoing edge in a particular subset $E' \subseteq E$.

\item $FindMin$:  is similarly defined except the edge is the (unique) minimum cost outgoing edge. This is used only in the minimum spanning tree algorithm. $FindMin$ requires $O(n \log^2 n/\log \log n)$ messages.

\item
$ApproxCut $: A function which w.h.p.
returns an estimate in $[k/32, k]$ where $k$ is the number of outgoing edges from $T$ and $k > c \log n$ for $c$ a constant.
It requires $O(n \log n)$ messages.


$FindAny$ and $FindMin$ are described in \cite{king2015construction} (The $FindAny$ we use is called $\textit{FindAny-C}$ there.) $\textit{FindAny-C}$ was used to find {\it any} outgoing edge in the previous paper. It is not hard to see that the edge found is a random edge from the set of outgoing edges; we use that fact here. The relationships among $FindAny$, $FindMin$ and $ApproxCut$ below are described in the next subsection.
 
\item $Found_L(v)$, $Found_O(v)$: Two lists of edges incident to node $v$, over which $v$ will send invitations to join $T$ the next time $v$ participates in $Expand$. After this, the list is emptied.  Edges are added to $Found_L(v)$ when $v$ receives $\langle \textit{Low-degree} \rangle$ message or the edge is found by the leader by sampling and its external endpoint is low-degree. Otherwise, an edge is added to $Found_O(v)$ when $v$ receives a $\langle \textit{Star} \rangle$ message over an edge or if the edge is found by the leader by sampling and its external endpoint is high-degree. Note that star nodes that are low-degree send both $\langle \textit{Low-degree} \rangle$ and $\langle \textit{Star} \rangle$. This may cause an edge to be in both lists which is handled properly in the algorithm.

\item $\textit{T-neighbor}(v)$: A list of neighbors of $v$ in $T$. This list, except perhaps during the execution of $Expand$, includes all low-degree neighbors of $v$ in $T$. This list is used to exclude from $Found_L(v)$ any non-outgoing edges.

\item
$ThresholdDetection(k)$:  A procedure which is initiated by the leader of $T$.  The nodes in $T$ experience no more than $k<n^2$ events w.h.p. The leader is informed w.h.p. when the number of events experienced by the nodes in $T$ reaches the threshold $k/4$. Here, an event is the receipt of $\langle \textit{Low-degree} \rangle$ over an outgoing edge.  Following the completion of $Expand$, all edges $(u,v)$ in $Found_L(u)$ are events if $v \notin \textit{T-neighbor}(u)$. $O(|T|\log n)$ messages suffice.
\end{itemize}

\textit{Note:} The reason we need the $\textit{T-neighbor}$ data structure is that we assumed that messages may not be received the same order they were sent. In particular, $\langle \textit{Low-degree} \rangle$ messages from neighbors of a node in the fragment tree could be received very late. Therefore, we need to have this data structure so that when the fragment is waiting for events (in $ThresholdDetection$) we do not count these late messages as events. An alternative way to deal with this is to have nodes send back acknowledgment messages if they receive a $\langle \textit{Low-degree} \rangle$  message. A  low-degree node will only send its future messages only if it has received an acknowledgement for its initial $\langle \textit{Low-degree} \rangle$  message. This guarantees that no $\langle \textit{Low-degree} \rangle$  message from a node inside the fragment will contribute to the events.

\subsection{Implementation of $FindAny$, $FindMin$ and $ApproxCut$:} \label{s:findany}

~~We briefly review $FindAny$ in \cite{king2015construction} and explain its connection with $ApproxCut$.  The key insight is that an outgoing edge is incident to exactly one endpoint in $T$ while other edges are incident to zero or two endpoints.  If there were exactly one outgoing edge, the parity of the sum of all degrees in $T$ would be 1, and the parity of bit-wise XOR of the binary representation of the names of all incident edges would be the name of the one outgoing edge. 

To deal with possibility of more than one outgoing edge,  the leader creates an efficient means of sampling edges at different rates: Let $l=\lceil{2\log n }\rceil$. The leader selects and broadcasts one pairwise independent hash function $h:[edge\_names]\rightarrow [1,2^l]$, where $edge\_name$ of an edge is a unique binary string computable by both its endpoints, e.g., $\{x,y\}= x \cdot y$ for $x<y$. Each node $y$ forms the vector $\overrightarrow{h(y)}$ whose $i^{th}$ bit is the parity of its incident edges that hash to $[0,2^i]$, $i = 0, \ldots, l$.   Starting with the leaves, a node in $T$ computes the bitwise XOR of the vectors from its children and itself and then passes this up the tree, until the leader has computed $\overrightarrow{b}=XOR_{y \in T} \overrightarrow{h(y)}$. The key insight implies that for each index $i$, $ \overrightarrow{b_i}$ equals the parity of just the outgoing edges mapped to $[0,2^i]$. Let $min$ be the smallest index $i$ s.t. $\overrightarrow{b_i}=1$. With constant probability,  exactly one edge of the outgoing edges has been mapped to $[1,2^{min}]$. 
The leader broadcasts $min$. Nodes send back up the XOR of the $edge\_names$ of incident edges which are mapped by $h$ to this range. If exactly one outgoing edge has been indeed mapped to that range, the leader will find it by again determining the XOR of the $edge\_names$ sent up. One more broadcast from the leader can be used to verify that this edge exists and is incident to exactly one node in $T$.

Since each edge has the same probability of failing in $[0,2^{min}]$, this procedure gives a randomly selected edge. Note also that the leader can instruct the nodes to exclude certain edges from the XOR, say incident edges of weight greater than some $w$. In this way the leader can binary search for the minimal weight outgoing edge to carry out $FindMin$.  Similarly, the leader can select random edges without replacement.

Observe that if the number of outgoing edges is close to  $2^{j}$, we'd expect $min$ to be $l-j$ with constant probability. Here we introduce distributed asynchronous $ApproxCut$ which uses the sampling technique from $FindMin$ but repeats it $O(\log n)$ times with $O(\log n)$ randomly chosen hash functions. Let $min\_sum$ be the minimum $i$ for which the sum of $ \overrightarrow{b_i}$s exceeds $c \log n$ for some constant $c$. We show $2^{min\_sum}$ approximates the number of outgoing edges within a constant factor from the actual number. $ApproxCut$ pseudocode is given in Algorithm 5.

We show:
\begin{lemma}
With probability $1-1/n^c$,  \emph{ApproxCut} returns an estimate in $[k/32, k]$ where $k$ is the number of outgoing edges and $k > c'\log n$, $c'$ a constant depending on $c$. It uses $O(n \log n)$ messages.
\end{lemma}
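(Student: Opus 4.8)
The plan is to analyze a single ``trial'' of the sampling routine of Subsection~\ref{s:findany} and then amplify by $\Theta(\log n)$ independent repetitions. Fix a pairwise independent $h:[edge\_names]\rightarrow[1,2^l]$ with $l=\lceil 2\log n\rceil$, so $2^l\ge n^2$, and let $k$ be the number of outgoing edges of $T$. For $i\in\{0,\dots,l\}$ write $\mu_i:=k\,2^{i-l}$ for the expected number of outgoing edges $e$ with $h(e)\le 2^i$; note that $\mu_i$ doubles with each increment of $i$, that $\mu_l=k$, and that $\mu_0\le k/n^2<1$. Since $\overrightarrow{b_i}=1$ exactly when an odd number of outgoing edges hash into $[1,2^i]$, I would first pin down $\Pr[\overrightarrow{b_i}=1]$ from both sides. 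For the upper side, $\overrightarrow{b_i}=1$ forces at least one outgoing edge into $[1,2^i]$, so a union bound over the $k$ outgoing edges gives $\Pr[\overrightarrow{b_i}=1]\le\mu_i$ (this uses only that $h$ is marginally uniform). For the lower side, $\overrightarrow{b_i}=1$ holds whenever \emph{exactly} one outgoing edge lands in $[1,2^i]$, and a Bonferroni inequality together with pairwise independence gives $\Pr[\textrm{exactly one}]\ge\mu_i-\mu_i^2$; in particular, at the unique index $i_0$ with $\mu_{i_0}\in[1/4,1/2)$ we obtain $\Pr[\overrightarrow{b_{i_0}}=1]\ge 3/16$. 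Such an $i_0$ exists because $\mu_i$ sweeps $[k/n^2,k]\supseteq[1/4,1/2)$ in factor-$2$ steps, and the hypothesis $k>c'\log n$ keeps $i_0$ (and the index $i_1$ introduced below) safely inside $\{0,\dots,l\}$ so that $min\_sum$ is well defined.

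Next I would run $R=\Theta(\log n)$ independent trials, let $S_i$ be the number of trials in which $\overrightarrow{b_i}=1$, and apply a Chernoff bound at each index followed by a union bound over the $O(\log n)$ indices. Taking the constant multiple in $R$ large relative to the threshold $c\log n$ that defines $min\_sum$, with probability $1-1/n^{c}$ both of the following hold: $S_{i_0}\ge c\log n$; and $S_i<c\log n$ for every index $i$ with $\mu_i\le\alpha$, where $\alpha$ is a fixed constant that sits below the threshold rate and below $3/16$. Letting $i_1$ be the largest index with $\mu_{i_1}\le\alpha$, these two events sandwich the first index whose count reaches the threshold: $i_1<min\_sum\le i_0$ (note $i_1<i_0$ since $\mu_{i_1}\le\alpha<1/4\le\mu_{i_0}$).

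To finish, I would push the sandwich back through $\mu$. From $\mu_{i_0}<1/2$ we get $2^{\,l-i_0}>2k$, and from the maximality of $i_1$ (so $\mu_{i_1+1}>\alpha$, hence $\mu_{i_1}>\alpha/2$) we get $2^{\,l-i_1-1}<k/\alpha$; since $l-i_0\le l-min\_sum\le l-i_1-1$, the value $2^{\,l-min\_sum}$ lies strictly between $2k$ and $k/\alpha$, a window of multiplicative width $1/(2\alpha)$. Choosing the repetition count and threshold so that $\alpha\ge 1/64$, this width is at most $32$, and after the fixed rescaling built into Algorithm~5 the returned estimate lies in $[k/32,k]$ with probability $1-1/n^{c}$. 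For the message bound, each trial consists of one $Broadcast$ of the $O(\log n)$-bit hash seed down $T$ together with one convergecast up $T$ of the length-$l$ partial-XOR vectors, i.e., $O(|T|)$ messages in the CONGEST model; over $R=O(\log n)$ trials plus a final broadcast of the answer this is $O(|T|\log n)=O(n\log n)$.

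I expect the crux to be the lower estimate $\Pr[\overrightarrow{b_i}=1]\ge\mu_i-\mu_i^2$: with only pairwise independence it is informative solely where $\mu_i=\Theta(1)$, so $min\_sum$ must be controlled from above by a single well-placed index $i_0$ rather than by an interval of indices, and one must then balance the choice of $\alpha$ (equivalently, the repetition count and threshold) so that it is small enough for the one-sided concentration at the indices with $\mu_i\le\alpha$ yet large enough that $1/(2\alpha)$ does not exceed the factor $32$ permitted by the statement. A minor point to handle is that concentration to the required $1/n^{c}$ forces the repetition count, but not $\alpha$, to scale with $c$, which is why $c'$ in the statement depends on $c$.
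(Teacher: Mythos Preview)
Your proposal is correct and follows essentially the same approach as the paper's proof: bound $\Pr[\overrightarrow{b_i}=1]$ from below at a critical index (via the ``exactly one edge lands in the bucket'' event) and from above at smaller indices (via a union bound), then apply Chernoff over $\Theta(\log n)$ independent hash functions to sandwich $min\_sum$ between two indices a constant apart, and finally translate back through the rescaling by $64$. The only cosmetic differences are that the paper cites an external lemma for the constant lower bound (getting $1/16$ where you derive $3/16$ by Bonferroni directly), fixes the specific threshold $(3/4)c\log n/16$ and the window $[j-3,j]$ rather than working with an abstract $\alpha$, and handles all small indices at once via a single indicator $Y$ (exploiting that the ranges $[2^i]$ are nested) instead of union-bounding over the $O(\log n)$ indices as you do.
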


The proof is given in Section \ref{s:approxcut}.

\section{Asynchronous ST construction with $o(m)$ messages}
\label{STconstruction}
In this section we explain how to construct a spanning tree when there is a preselected leader and the graph is connected.

Initially, each node selects itself with probability $1/\sqrt{n \log n}$ as a \emph{star node}. Low-degree and star nodes initially send out $\langle \textit{Low-degree} \rangle$ and $\langle \textit{Star} \rangle$ messages to all of their neighbors, respectively. (We will be using the $\langle M \rangle$ notation to show a message with content $M$.) A low-degree node which is a star node sends both types of messages. At any point during the algorithm, if a node $v$ receives a $\langle \textit{Low-degree} \rangle$ or $\langle \textit{Star} \rangle$ message through some edge $e$, it adds $e$ to $Found_L(v)$ or $Found_O(v)$ resp.

The algorithm FindST-Leader runs in phases. Each phase has three parts: {\bf 1)  Expansion} of $T$ over found edges since the previous phase and restoration of the Invariant; {\bf 2) Search}  for an outgoing edge to a high-degree node;  {\bf 3) Wait} until messages to nodes in $T$ have been received over a {\it constant fraction of the outgoing edges} whose external endpoint is low-degree. 

\bigskip
\noindent 
{\bf 1) Expansion:}
Each phase is started with $Expand$. $Expand$ adds to $T$ any nodes which are external endpoints of outgoing edges placed on a $Found$ list of any node in $T$ since the last time that node executed $Expand$. In addition, it restores the Invariant for $T$.  \\

\noindent
{\it Implementation:} Expand is initiated by the leader and broadcast down the tree. When a node $v$ receives $\langle \textit{Expand} \rangle$ message for the first time (it is not in $T$),  it joins $T$ and makes the sender its parent. If it is a high-degree node and is not a star, it has to wait until it receives a $\langle \textit{Star} \rangle$ message over some edge $e$, and then adds $e$ to $Found_O(v)$. It then forwards $\langle \textit{Expand} \rangle$ over the edges in $Found_L(v) $ or $Found_O(v)$ and empties these lists. Otherwise, if it is a low-degree node or a star node, it forwards $\langle \textit{Expand} \rangle$ to \textit{all} of its neighbors. 

On the other hand, if $v$ is already in $T$, it forwards  $\langle \textit{Expand} \rangle$ message to its children in $T$ and along any edges in $Found_L(v)$ or $Found_O(v)$, i.e. outgoing edges which were ``found'' since the previous phase, and empties these lists. All $\langle \textit{Expand} \rangle$ requests received by $v$ are answered, and their sender is  added to $\textit{T-neighbor}(v)$. The procedure ends in a bottom-up way and ensures that each node has heard from all the nodes it sent $\langle \textit{Expand} \rangle$ requests to before it contacts its parent.  

Let  $T^i$ denote $T$ after the execution of $Expand$ in phase $i$. Initially $T^0$ consists of the leader node and as its Found lists contain all its neighbors, after the first execution of $Expand$, if the leader is high-degree, $T_1$ satisfies the invariant. An easy inductive argument on $T^i$ shows:

\begin{observation}
\label{o:expand}
For all $i>0$,  upon completion of $Expand$, all the nodes reachable by edges in the $Found$ lists of any node in $T^{i-1}$ are in $T^i$, and for all $v \in T$, $\textit{T-neighbor}(v)$ contains all the low-degree neighbors of $v$ in $T$.  
\end{observation}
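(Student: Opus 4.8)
The plan is to prove the two assertions together by induction on the phase number $i$, reading the ``$T$'' in the statement as $T^i$, the state of the tree at the instant the phase-$i$ call to $Expand$ returns. Before anything else I would check that $Expand$ actually terminates, so that $T^i$ is well defined: the only place a call can block is when a freshly joined high-degree node waits for a $\langle \textit{Star} \rangle$ message, but conditioning on the w.h.p.\ event that every high-degree node of $G$ is adjacent to a star node, that star neighbour has already sent the message at wake-up and eventual delivery releases the block; combined with the bottom-up completion discipline (a node reports to its parent only after every node it pushed $\langle \textit{Expand} \rangle$ to has answered), this shows each $Expand$ finishes. Note also $T^{i-1}\subseteq T^i$, since nodes never leave $T$.

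For the base case $i=1$, $T^0$ is the lone leader, whose $Found$ lists start out holding all of its incident edges; so it pushes $\langle \textit{Expand} \rangle$ along every incident edge, each neighbour joins, and $T^1$ contains every node reachable by an edge in the leader's $Found$ lists. The $\textit{T-neighbor}$ half of the base case is the degenerate version of the general argument below.

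The inductive step for the first assertion I would argue as a ``recursive broadcast reaches everything'' claim: by induction on depth in the rooted tree $T^{i-1}$, the leader's $\langle \textit{Expand} \rangle$ reaches every node $x \in T^{i-1}$ (each forwards to its children in $T$), and when $x$ handles it, it forwards along every edge then in $Found_L(x) \cup Found_O(x)$; hence each external endpoint $w$ of such an edge receives $\langle \textit{Expand} \rangle$, and the first copy it receives makes it join $T$, so $w \in T^i$. The bottom-up completion rule forces all these joins (and the $\langle \textit{Star} \rangle$-waits they trigger) to finish before the leader ends phase $i$, so they are visible in $T^i$. For the second assertion, I would fix $v \in T^i$ and a low-degree neighbour $u \in T^i$ of $v$, let $j \le i$ be the phase in which $u$ first joined, and note that during that phase's $Expand$ the low-degree node $u$ pushed $\langle \textit{Expand} \rangle$ to all of its neighbours --- in particular to $v$, which by the same token was in $T$ no later than phase $j$; $v$ therefore answered $u$'s request and inserted $u$ into $\textit{T-neighbor}(v)$, within phase $j$ by the completion rule, and since $\textit{T-neighbor}$ lists are only appended to, $u$ is still there when phase-$i$'s $Expand$ returns. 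Only this direction is claimed, not that $v \in \textit{T-neighbor}(u)$.

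The hard part is not any single step but taming the asynchrony: I expect the delicate points to be (i) using the star-adjacency property --- rather than any ordering of messages --- to guarantee that a newly joined high-degree node's wait for a $\langle \textit{Star} \rangle$ message terminates, and (ii) using the bottom-up completion discipline of $Expand$ to pin every join and every $\textit{T-neighbor}$ insertion to the phase that triggers it, so that these updates show up ``upon completion of $Expand$'' and not only during some later $Expand$ (when, as the $\textit{T-neighbor}$ data structure's own caveat warns, the list need not yet be complete). Everything else --- the claim about reachable nodes, and the fact that a low-degree node in $T$ drags all of its neighbours into $T$ in the same phase --- then follows directly from the $Expand$ pseudocode.
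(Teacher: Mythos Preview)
Your proposal is correct and follows exactly the approach the paper indicates: the paper states only that ``an easy inductive argument on $T^i$ shows'' the observation and gives no further details, so your write-up is a faithful fleshing-out of that one-line sketch. In particular, your handling of termination via the star-adjacency event and of the $\textit{T-neighbor}$ claim via the bottom-up completion discipline is precisely the kind of detail the paper omits but implicitly relies on.
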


 $Expand$ is called in line \ref{expandLine} of the main algorithm \ref{findMST}. The pseudocode is given in $Expand$ Algorithm  \ref{expandAlg}.
 
\bigskip
 
\noindent
{\bf 2) Search for an outgoing edge to a high degree node:}
A sampling of the outgoing edges without replacement is done using $FindAny$ multiple times. The sampling either (1) finds an outgoing edge to a high degree node, or (2) finds all outgoing edges, or (3) determines  w.h.p. that at least  half the outgoing edges are to low-degree nodes and there are at least 
$2c \log n$ such edges.  If the first two cases occur, the phase ends.\\  

\noindent
{\it Implementation:} Endpoints of sampled edges in $T$ communicate over the outgoing edge to determine if the external endpoint is high-degree. If at least one is, that edge is added to the $Found_O$ list of its endpoint in $T$ and the phase ends. If there are fewer than 
$2\log n$ outgoing edges, all these edges are added to $Found_O$ and the phase ends. If there are no outgoing edges, the algorithm ends.  If all $2\log n$ edges go to low-degree nodes, then the phase continues with Step 3) below. This is implemented in the {\bf while} loop of FindST-Leader.
\\
\\
Throughout this section we will be using the following fact from Chernoff bounds:\\
Assume $X_1, X_2, \ldots, X_T$ are independent Bernoulli trials where each trial's outcome is 1 with probability $0 < p < 1$. Chernoff bounds imply that given constants $c$, $c_1 >1$ and $c_2<1$  there is a constant $c''$ such that if there are $T \geq  c'' \log n$ independent trials, then $Pr(X > c_1 \cdot E[X]) < 1/n^c$ and  $Pr(X < c_2 \cdot E[X]) < 1/n^c$, where $X$ is sum of the $X_1, \ldots, X_T$.\\

We show:

\begin{lemma}\
\label{aftersearch}
After Search, at least one of the following must be true with probability $1-1/n^{c'}$, where $c'$ is a constant depending on $c$: 1) there are fewer than $2c \log n$ outgoing edges and the leader learns them all; 2) an outgoing edge to a high-degree node is found, or 3)
 there are at least  $2c \log n$ outgoing edges and at least half the outgoing edges are to low-degree nodes.
 \end{lemma}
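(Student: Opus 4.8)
Let $k$ denote the number of outgoing edges from $T$ when Search starts, let $k_H$ be the number of those with a high-degree external endpoint, and $k_L=k-k_H$. Recall how Search proceeds: the leader invokes $FindAny$ repeatedly on the set of outgoing edges not yet returned (``without replacement''), so the $j$-th invocation returns, with some constant probability $q>0$, a uniformly random outgoing edge that has not yet been returned, and otherwise returns $\emptyset$; it never returns a non-outgoing edge. The leader halts Search and declares case~2 the first time a returned edge is verified to have a high-degree external endpoint; it halts and passes to Step~3 as soon as $2c\log n$ distinct outgoing edges have been returned (all low-degree, since otherwise case~2 would already have fired); and it halts and declares case~1 once it has collected a set $F$ of outgoing edges together with a certificate that none remains, namely a run of $\Theta(\log n)$ consecutive $\emptyset$ answers. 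The plan is to show that, with probability $1-n^{-c'}$, one of these three events occurs and yields one of the three stated conclusions.

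The crux is a per-step estimate: conditioned on no high-degree edge having been returned in invocations $1,\dots,j-1$, invocation $j$ returns an edge with a high-degree external endpoint with probability at least $q\,k_H/k$. Indeed, under that conditioning every previously returned edge is low-degree, so the not-yet-returned outgoing edges still include all $k_H$ high-degree ones out of at most $k$ edges; since $FindAny$'s non-empty output is uniform on the not-yet-returned outgoing edges, the bound follows, uniformly over every history inside the conditioning event. Multiplying these conditional probabilities along the chain, if $k_H\ge k/2$ then the probability that $N$ successive invocations all miss a high-degree edge is at most $(1-q/2)^N$, which is below $n^{-c'}$ once $N\ge c''\log n$ for a suitable $c''$ depending on $c$. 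By an analogous Chernoff estimate (of the type quoted before the lemma), when $k<2c\log n$ the leader returns \emph{all} $k$ outgoing edges within $O(\log n)$ invocations with high probability, because as long as an outgoing edge remains unreturned each invocation returns a new one with probability $\ge q$, so the number of invocations is a sum of $k=O(\log n)$ geometric variables; and once all are returned, every later invocation yields $\emptyset$ with certainty, so the $\Theta(\log n)$-streak certificate is hit. Conversely, a $\Theta(\log n)$-streak of $\emptyset$'s while an outgoing edge still remained has probability at most $(1-q)^{\Theta(\log n)}\le n^{-c'}$, so the certificate is sound with high probability.

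It remains to combine these by a case analysis on which halting event fires (all claims holding with high probability). If a high-degree external endpoint is verified, case~2 holds by construction. If $2c\log n$ distinct outgoing edges are collected before any high-degree one, then $k_L\ge 2c\log n$, hence $k\ge 2c\log n$; moreover at that point at least $2c\log n\ge c''\log n$ invocations have occurred without returning a high-degree edge, so by the previous paragraph $k_H<k/2$, i.e.\ at least half the outgoing edges are low-degree, which is case~3. If the leader instead reaches the empty-streak certificate, then $F$ is exactly the set of outgoing edges and, since Step~3's threshold was not reached, $|F|<2c\log n$, so there are fewer than $2c\log n$ outgoing edges and the leader knows them all, which is case~1. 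Finally one checks that some halting event does fire within the invocation budget with high probability: if $k<2c\log n$ the empty-streak certificate or a high-degree discovery is reached within $O(\log n)$ invocations as above, and if $k\ge 2c\log n$ then within $O(\log n)$ invocations either a high-degree edge is returned or $2c\log n$ distinct low-degree ones are collected. A union bound over this constant number of failure events, with $c'$ fixed appropriately relative to $c$, finishes the proof.

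The main obstacle is the second paragraph: since the samples are drawn without replacement and adaptively, the invocations are not independent and Chernoff bounds do not apply directly. The fix is to condition on the ``no high-degree edge so far'' event, note that this conditioning only deletes low-degree edges from the remaining pool and therefore only \emph{raises} the high-degree fraction above $k_H/k$, and then use the chain rule to multiply the per-step conditional probabilities instead of treating the steps as independent. A secondary point requiring care is the soundness of the leader's ``no outgoing edge remains'' certificate with high probability, which is why a $\Theta(\log n)$-long run of $\emptyset$ answers is used rather than a single $\emptyset$.
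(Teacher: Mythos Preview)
Your probabilistic reasoning is sound, and your handling of the dependence induced by sampling without replacement (conditioning on ``no high-degree edge so far'' and observing that this can only raise the high-degree fraction in the remaining pool) is actually more careful than the paper's one-line appeal to independence. The issue is not mathematical: you have misdescribed what Search does, and your case analysis is built around a procedure that does not appear in the paper.

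In the paper, Search simply runs a \emph{fixed} number of $48c\log n$ calls to $FindAny$ (each on the outgoing edges not already returned), collecting the results in a set $A$; only after the loop finishes does the leader inspect $A$. There is no early halting upon a high-degree discovery, no stopping once $2c\log n$ edges are in hand, and---most importantly---no ``$\Theta(\log n)$-streak of $\emptyset$'' certificate. Your entire discussion of the soundness and completeness of the empty-streak certificate, and of whether a halting event fires within the budget, is therefore proving properties of an algorithm the paper does not use.

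The paper's argument is correspondingly simpler. Since each $FindAny$ call succeeds with probability $1/16$ using fresh randomness, Chernoff gives that w.h.p.\ at least $2c\log n$ of the $48c\log n$ calls succeed; hence if there are at least $2c\log n$ outgoing edges then $|A|\ge 2c\log n$, and if there are fewer then all of them are returned (case~1). For case~3, if more than half the outgoing edges go to high-degree nodes, the probability that $2c\log n$ uniform-without-replacement samples are all low-degree is at most $2^{-2c\log n}$---precisely the bound your chain-rule argument justifies. So keep your conditioning observation (it plugs a small hole in the paper's phrasing), but drop the adaptive-halting and empty-streak machinery and argue directly about the fixed-length loop and the set $A$.
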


\begin{proof}
Each $FindAny$ has a probability of  1/16 of returning an outgoing edge and if it returns an edge, it is always outgoing. After $48 c \log n$ repetitions without replacement, the expected number of edges returned is $3c \log n$. As these trials are independent, Chernoff bounds imply that at least 2/3 of trials will be successful with probability at least $1-1/n^c$, i.e.,  $2 c \log n$ edges are returned if there are that many, and if there are fewer, all will be returned.

The edges are picked  uniformly at random by independent repetitions of $FindAny$. If more than half the outgoing edges are to high-degree nodes, the probability that all edges returned are to low-degree nodes is $1/2^{2c \log n} < 1/n^{2c}$.

\end{proof}

\bigskip

\noindent
{\bf  3) Wait to hear from outgoing edges to low-degree external nodes:}
This step forces the leader to wait until $T$ has been contacted over a constant fraction of the outgoing edges to (external) low-degree nodes. Note that we do not know how to give a good estimate on the number of low-degree nodes which are neighbors of $T$.  Instead we count outgoing edges. \\

\noindent
{\it Implementation:} 
This step occurs only if the $2c \log n$ randomly sampled outgoing edges all go to low-degree nodes and therefore the number of outgoing edges to low-degree nodes is at least this number. In this case, the leader waits until $T$ has been contacted through a constant fraction of these edges.

If this step occurs, then w.h.p., at least half the outgoing edges go to low-degree nodes. Let $k$ be the number of outgoing edges; $k \geq 2c \log n$. The leader calls $ApproxCut$ to return an estimate $q \in [k/32, k]$ w.h.p. It follows that w.h.p. the number of outgoing edges to low-degree nodes is $k/2$. Let $r=q/2$. Then $r \in [k/64,k/2]$.

The nodes $v \in T$ will eventually receive at least $k/2$ $\langle \textit{Low-degree} \rangle$ messages over the outgoing edges.  Note that these messages must have been received by $v$ after $v$ executed $Expand$ and added to $Found_L(v)$, for otherwise, these would not be outgoing edges. 

The leader initiates a {\it ThresholdDetection} procedure whereby there is an event for a node $v$ for each outgoing edge $v$ has received a 
$\langle \textit{Low-degree} \rangle$ message over since the last time $v$ executed $Expand$. As the {\it ThresholdDetection} procedure is initiated after the leader finishes $Expand$, the $\textit{T-neighbor}(v) $ includes any low-degree neighbor of $v$ that is in $T$. Using $\textit{T-neighbor}(v)$, $v$ can determine which edges in $Found_L(v)$ are outgoing. 

Each event experienced by a node causes it to flip a coin with probability $\min\{c \log n/r, 1\}$. If the coin is heads, then a trigger message labelled with the phase number is sent up to the leader. The leader is triggered if it receives at least  $(c/2)  \log  n$ trigger messages for that phase.
When the leader is triggered, it begins a new phase. Since there are $k/2$ triggering events, the expected number of trigger messages eventually generated is $(c \log n/r)(k/2) \geq c \log n$. Chernoff bounds imply that at least $(c/2) \log n$ trigger messages will be generated w.h.p. Alternatively, w.h.p., the number of trigger messages received by the leader will not exceed $(c/2) \log n$ until at least $k/8$ events have occurred, as this would imply twice the expected number.
We can conclude that w.h.p. the leader will trigger the next phase after $1/4$ of the outgoing edges to low-degree nodes have been found.
 
\begin{lemma}
\label{l:threshold}
When the leader receives  $(c/2)  \log n$ messages with the current phase number, w.h.p, at least 1/4 of the outgoing edges to low-degree nodes have been added to $Found_L$ lists.
\end{lemma}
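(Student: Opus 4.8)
The plan is to establish the contrapositive-flavored claim: with high probability the leader does not accumulate $(c/2)\log n$ trigger messages carrying the current phase number until at least a constant fraction of the outgoing edges to low-degree nodes have been placed on $Found_L$ lists. First I would condition on the intersection of two events, each of probability $1-1/n^{\Omega(c)}$: (i) the call to $ApproxCut$ in this phase returned $q\in[k/32,k]$, so that $r=q/2\in[k/64,k/2]$ (by the stated guarantee of $ApproxCut$); and (ii) the conclusion drawn after Search is correct, so that the number $L$ of outgoing edges whose external endpoint is low-degree satisfies $k/2\le L\le k$ (Lemma~\ref{aftersearch}); in particular $L\ge k/2\ge c\log n$. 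Conditioned on (i) the coin-flip probability $p:=\min\{c\log n/r,\,1\}$ is a fixed number with $p\le c\log n/r\le 64\,c\log n/k$.

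The main step is to reduce the statement to a single upper-tail Chernoff bound at a fixed index, so that the adversary's control of message delays is neutralized. By the construction of $ThresholdDetection$, the bookkeeping done via $\textit{T-neighbor}$, and the fact that $Expand$ completes before $ThresholdDetection$ starts, the events are exactly the $L$ receptions of $\langle\textit{Low-degree}\rangle$ over the $L$ outgoing edges to low-degree nodes, and each of them, when processed, independently produces a trigger message with probability $p$. I would list the events $1,2,\dots,L$ in the order the adversary schedules their processing, and set $Y_t\sim\mathrm{Bin}(t,p)$ to be the number of triggers produced among the first $t$; note $Y_t$ is non-decreasing in $t$. Since trigger messages are only delayed, never lost or duplicated, at any instant the number the leader has received is at most the number produced so far, namely $Y_t$ with $t$ the number of events processed so far. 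Hence if the leader is triggered while only $t$ events have been processed then $Y_t\ge(c/2)\log n$, so for any fixed $t^\star$,
\[
\Pr[\text{leader triggered before }t^\star\text{ events}]\ \le\ \Pr\Big[\max_{t<t^\star}Y_t\ge(c/2)\log n\Big]\ =\ \Pr\big[Y_{t^\star-1}\ge(c/2)\log n\big].
\]

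I would then take $t^\star=\lfloor r/8\rfloor$, so that $E[Y_{t^\star-1}]\le t^\star p\le (r/8)(c\log n/r)=(c/8)\log n$, which is a constant factor below $(c/2)\log n$; meanwhile $t^\star=\Theta(r)=\Omega(\log n)$ (since $r\ge k/64\ge c\log n/32$), so the Chernoff fact quoted in this section applies and gives $\Pr[Y_{t^\star-1}\ge(c/2)\log n]\le 1/n^{c'}$ for a constant $c'$ depending on $c$ (enlarging the constant inside $p$ if necessary). Combining with the displayed inequality, w.h.p.\ the leader is not triggered until at least $t^\star=\lfloor r/8\rfloor=\Theta(k)$ outgoing edges to low-degree nodes have been added to $Found_L$; since $L\le k$, this is a constant fraction of all of them. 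To reach exactly the fraction $1/4$ of the statement one either feeds the argument a cut estimate accurate to within a small constant factor (e.g.\ $ApproxCut$ run with tighter constants) or absorbs the slack of $ApproxCut$ into the constants of $ThresholdDetection$; neither changes the structure of the proof.

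The step I expect to be the main obstacle is the reduction in the second paragraph. An asynchronous adversary can keep arbitrarily many trigger messages in transit, so it is tempting but wrong to argue directly about ``how many triggers the leader holds after $t$ events''; the way around this is to use only the deterministic, monotone chain received $\le$ produced $= Y_t$, which collapses the entire claim into the upper Chernoff tail of one random variable $Y_{t^\star-1}$ and makes the scheduler's adaptivity irrelevant. The one remaining thing to verify carefully is that the $L$ counted events are exactly the low-degree outgoing edges and not, say, stale $\langle\textit{Low-degree}\rangle$ messages originating inside $T$ — which is precisely what $\textit{T-neighbor}$, updated during $Expand$, together with running $Expand$ to completion before $ThresholdDetection$, is designed to guarantee.
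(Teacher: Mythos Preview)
Your approach is essentially the paper's: condition on the $ApproxCut$ and Search guarantees, then use a single upper-tail Chernoff bound on the number of triggers generated by the first $t^\star$ events to argue the leader cannot reach $(c/2)\log n$ triggers that early. The paper's argument (the paragraph immediately preceding the lemma) is the same idea stated more tersely: it asserts that reaching $(c/2)\log n$ triggers before $k/8$ events ``would imply twice the expected number,'' and invokes Chernoff.

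Two points worth noting. First, you are more careful than the paper about the asynchronous adversary: your monotone chain $\text{received}\le\text{produced}=Y_t$ is exactly the right reduction, and the paper does not make it explicit. Second, your observation that the literal constant $1/4$ does not fall out of the stated $ApproxCut$ guarantee ($r\in[k/64,k/2]$) with the given trigger probability is accurate; indeed the paper's own sentence ``this would imply twice the expected number'' already fails when $r$ is near $k/64$, since then the expected trigger count after $k/8$ events is about $8c\log n$, not $(c/4)\log n$. So the paper is equally loose here, and your remedy---tighten the $ApproxCut$ constants or absorb the slack into $ThresholdDetection$---is the right fix for both arguments. The asymptotic conclusion (a constant fraction, hence $O(\log n)$ type-B phases between type-A phases) is unaffected.
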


\subsection{Proof of the main theorem:}
~~Here we prove Theorem \ref{theorem} as it applies to computing the spanning tree of a connected network with a pre-selected leader.

\begin{lemma} \label{l:AB}
W.h.p., after each phase except perhaps the first, either \textbf{(A)} A high-degree node (and star node) is added to $T$ or \textbf{(B)} $T$ is expanded so that the number of outgoing edges to low-degree nodes is reduced by a 1/4 factor (or the algorithm terminates with a spanning tree).
\end{lemma}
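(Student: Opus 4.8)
The plan is to establish Lemma \ref{l:AB} by tracing through the three parts of a phase and combining the guarantees proved in Lemmas \ref{aftersearch} and \ref{l:threshold} with the Invariant. The key case distinction is dictated by the outcome of the Search step: by Lemma \ref{aftersearch}, with probability $1-1/n^{c'}$ exactly one of its three alternatives holds, and I would show that the first two lead to case \textbf{(A)} (or termination) and the third leads to case \textbf{(B)}.

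\textbf{Case (A): Search finds an outgoing edge to a high-degree node, or finds all outgoing edges.} If Search returns an outgoing edge $(u,x)$ with $x$ high-degree and $x \notin T$, then $(u,x)$ is placed on $Found_O(u)$ and the phase ends; at the start of the \emph{next} phase's Expand, $x$ joins $T$. Since $x$ is high-degree, by the Invariant $x$ is adjacent to some star node $s$, and by the Invariant applied to $T$ (every star node in $T$ has all its neighbors in $T$), $s \notin T$ before this Expand --- otherwise $x$ would already have been in $T$. When $x$ executes its part of Expand it waits for a $\langle \textit{Star}\rangle$ message, records the edge to $s$ on $Found_O(x)$, and forwards $\langle \textit{Expand}\rangle$ along it, so $s$ also joins $T$. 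Thus a high-degree node and a star node are both added, which is exactly case \textbf{(A)}. If instead Search discovers there are fewer than $2c\log n$ outgoing edges and learns them all, then either some external endpoint is high-degree (handled as above, case \textbf{(A)}), or all external endpoints are low-degree and all these edges are placed on $Found_L$ lists; then the next Expand adds all of them and, since these were \emph{all} the outgoing edges, $T$ becomes spanning and the algorithm terminates. If there are no outgoing edges at all, the algorithm has already terminated with a spanning tree.

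\textbf{Case (B): Search reports at least $2c\log n$ outgoing edges with at least half going to low-degree nodes.} Then the phase proceeds to the Wait step. Here I invoke Lemma \ref{l:threshold}: when the leader is triggered and begins the next phase, w.h.p.\ at least $1/4$ of the outgoing edges to low-degree nodes have been added to $Found_L$ lists of nodes in $T$. At the next Expand, by Observation \ref{o:expand} all nodes reachable via edges on those $Found$ lists are brought into $T$; in particular every such low-degree external endpoint joins $T$. An edge counted in the Wait step is outgoing \emph{at the time it was counted}, and once its external endpoint is absorbed into $T$ that edge is no longer outgoing. Hence the set of outgoing edges to low-degree nodes shrinks by at least a $1/4$ fraction across the phase boundary, which is case \textbf{(B)}. (New outgoing edges to low-degree nodes can only be created by absorbing new nodes, but every newly absorbed low-degree or star node has \emph{all} its neighbors pulled into $T$ in the same Expand by the Invariant-restoration step, and a newly absorbed high-degree node contributes no new low-degree external endpoints beyond what Expand simultaneously absorbs --- so the count genuinely decreases.)

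\textbf{The main obstacle} I anticipate is the bookkeeping around what ``the number of outgoing edges to low-degree nodes'' means across a phase boundary, given asynchrony and out-of-order delivery: one must argue carefully that edges counted as events in $ThresholdDetection$ were genuinely outgoing when counted (this is where the $\textit{T-neighbor}(v)$ structure and Observation \ref{o:expand} are essential), that the subsequent Expand actually removes those edges from the outgoing set, and that no new low-degree outgoing edges are silently introduced to offset the decrease. A secondary subtlety is justifying the ``except perhaps the first'' clause --- the leader's initial $T^0$ may not satisfy the Invariant until after the first Expand (if the leader is high-degree it must first hear a $\langle\textit{Star}\rangle$ message), so the dichotomy is only claimed from phase $2$ onward. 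Finally, the probabilistic statement is obtained by a union bound over the failure events of Lemma \ref{aftersearch}, Lemma \ref{l:threshold}, and the ApproxCut lemma within this phase, each of which fails with probability at most $1/n^{c'}$ for a suitable constant.
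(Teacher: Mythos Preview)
Your plan uses the right ingredients (Lemma~\ref{aftersearch}, Lemma~\ref{l:threshold}, Observation~\ref{o:expand}, the Invariant) and is close to the paper's argument, but your case split is placed at the wrong point and this leads to a genuine gap in your Case~(B).

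You partition according to the outcome of \emph{Search} and try to show that the third Search outcome always yields conclusion~\textbf{(B)}. It need not. Suppose Search enters the Wait branch, and in the subsequent $Expand$ a chain of low-degree external nodes is absorbed which eventually reaches a high-degree node $y$. Then $y$ is absorbed, and $y$ may have many low-degree neighbors whose $\langle\textit{Low-degree}\rangle$ messages have not yet arrived at $y$; those neighbors are \emph{not} in $Found_L(y)$, so $Expand$ does not pull them in, and each becomes a brand-new outgoing edge to a low-degree node. Your parenthetical claim that ``a newly absorbed high-degree node contributes no new low-degree external endpoints beyond what Expand simultaneously absorbs'' is therefore false in general, and the $1/4$ decrease can fail in exactly this situation.

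The fix --- and this is how the paper argues --- is to place the dichotomy at the level of $Expand$ rather than Search: either the next $Expand$ absorbs some high-degree node, or it does not. If it does, the Invariant immediately gives you a new star node as well, and you are in case~\textbf{(A)} regardless of which Search branch you came through. If it does not, then every newly absorbed node is low-degree or star, so by the Invariant all of its neighbors are also in $T$; hence no new outgoing edges are created at all, while Lemma~\ref{l:threshold} guarantees that at least $1/4$ of the old outgoing-to-low-degree edges have been eliminated, giving case~\textbf{(B)}. With this re-cut the argument goes through cleanly and your parenthetical becomes unnecessary.
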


\begin{proof}
By Lemma \ref{aftersearch} there are three possible results from the {\bf Search} phase. If a sampled outgoing edge to a high-degree node is found, this edge will be added to the $Found_O$ list of its endpoint in $T$. If the {\bf Search} phase ends in fewer than $2c \log n$ edges found and none of them are to high degree nodes,  then w.h.p. these are all the outgoing edges to low-degree nodes, these edges will all be added to some $Found_L$.   If there are no outgoing edges, the algorithm terminates and a spanning tree has been found. If the third possible result occurs, then there are at least $2 \log n$ outgoing edges, half of which go to low-degree nodes. By Lemma \ref{l:threshold}, the leader will trigger the next phase and it will do so after at least $1/4$ of the outgoing edges to low-degree nodes have been added to $Found_L$ lists.  

By Observation \ref{o:expand}, all the endpoints of the edges on the $Found$ lists will be added to $T$ in the next phase, and there is at least one such edge or there are no outgoing edges and the spanning tree has been found. When $Expand$ is called in the next phase,  $T$ acquires a new high degree node in two possible ways, either because an outgoing edge on a $Found$ list is to a high-degree node or because the recursive $Expand$ on outgoing edges to low-degree edges eventually leads to an external high-degree node.  In either case, by the Invariant, $T$ will acquire a new star node as well as a high-degree node.  Also by the Invariant, all outgoing edges must come from high-degree nodes. Therefore, if no high-degree nodes are added to $T$ by Expand, then no new outgoing edges are added to $T$.  On the other hand, $1/4$ of the outgoing edges to low-degree nodes have become non-outgoing edges as their endpoints have been added to $T$. So we can conclude that the number of outgoing edges to low-degree nodes have been decreased by 1/4 factor.
\end{proof}

It is not hard to see:
\begin{lemma} \label{l:countPhases}
The number of phases is bounded by $O(\sqrt{n} \log^{1/2} n)$.
\end{lemma}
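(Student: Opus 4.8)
The plan is to classify each phase (other than the first) as \textbf{type (A)} or \textbf{type (B)} exactly as in Lemma \ref{l:AB}, and bound the number of phases of each type separately. First I would show that the number of type (A) phases is at most the number of star nodes. By Lemma \ref{l:AB} each type (A) phase adds to $T$ some high-degree node together with an adjacent star node, and by the Invariant this star node must have been \emph{outside} $T$ before the phase: if the newly added high-degree node $x$ lay outside $T$, then $x$'s adjacent star node cannot already have been in $T$, since otherwise the Invariant would force all of that star node's neighbors — including $x$ — to be in $T$. Since a node that is once in $T$ stays in $T$, no star node is freshly added twice, so the number of type (A) phases is at most the total number of star nodes. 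Each of the $n$ nodes independently becomes a star node with probability $1/\sqrt{n\log n}$, so the expected number of star nodes is $\sqrt{n/\log n}$, which is $\Omega(\log n)$ for large $n$; hence by Chernoff bounds the number of star nodes, and therefore the number of type (A) phases, is $O(\sqrt{n/\log n})$ w.h.p.

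Next I would bound the number of consecutive type (B) phases that can occur between two type (A) phases (and before the first type (A) phase). Let $L$ denote the number of outgoing edges from $T$ whose external endpoint is low-degree. By the analysis underlying Lemma \ref{aftersearch}, a type (B) phase occurs only when the $\Theta(\log n)$ distinct outgoing edges returned by the sampling in \textbf{Search} all go to low-degree nodes, which forces $L \ge 2c\log n$. By Lemma \ref{l:threshold} and the reasoning in the proof of Lemma \ref{l:AB}, such a phase turns at least $L/4$ of these edges into non-outgoing edges, so $L$ shrinks by a factor of at most $3/4$. The crucial point is that a type (B) phase adds no new high-degree node to $T$, and by the Invariant every outgoing edge emanates from a high-degree node, so no fresh outgoing edges are created; thus $L$ is non-increasing across a block of consecutive type (B) phases. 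Since $L \le \binom{n}{2} < n^2$ at the start of such a block (immediately after a type (A) phase, or at the very beginning) and must stay $\ge 2c\log n$ for each further type (B) phase to occur, at most $\log_{4/3}\!\bigl(n^2/(2c\log n)\bigr) = O(\log n)$ type (B) phases fit in one block.

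Combining the two bounds, the execution consists of the first phase, an initial block of $O(\log n)$ type (B) phases, and then $O(\sqrt{n/\log n})$ type (A) phases each followed by a block of $O(\log n)$ type (B) phases (with the algorithm terminating once there are no outgoing edges). This gives a total of $O(\sqrt{n/\log n}\cdot\log n) = O(\sqrt{n}\,\log^{1/2} n)$ phases. Finally, since each probabilistic guarantee invoked (Lemmas \ref{aftersearch}, \ref{l:threshold}, \ref{l:AB}, and the bound on the number of star nodes) holds with probability $1 - n^{-\Theta(1)}$ and there are only polynomially many phases, a union bound (adjusting constants) shows the bound holds w.h.p.

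The step I expect to be the main obstacle is the monotonicity argument: pinning down that $L$ never increases within a block of type (B) phases. This is exactly where one must lean on the Invariant (outgoing edges leave only high-degree nodes) and on the fact, extracted from the proof of Lemma \ref{l:AB}, that a type (B) phase introduces no high-degree node into $T$, so there is no source of new outgoing edges to counteract the $3/4$ contraction. Everything else — the Chernoff estimate on the number of star nodes and the geometric-decay counting — is routine.
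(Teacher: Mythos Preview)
Your proposal is correct and follows essentially the same approach as the paper: classify phases as type~\textbf{A} or type~\textbf{B} via Lemma~\ref{l:AB}, bound the type~\textbf{A} phases by the number of star nodes (which is $O(\sqrt{n/\log n})$ w.h.p.\ by Chernoff), and bound each run of consecutive type~\textbf{B} phases by $O(\log n)$ using the $3/4$ contraction of $L$ against the trivial upper bound $n^2$. Your write-up is in fact more careful than the paper's own proof---you make explicit why the star node added in a type~\textbf{A} phase must be new (via the Invariant) and why $L$ cannot increase during a type~\textbf{B} block---but these points are already implicit in the paper's proof of Lemma~\ref{l:AB}, so there is no substantive difference in strategy.
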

\begin{proof} By Lemma \ref{l:AB}, every phase except perhaps the first, is of type A or type B.
Chernoff bounds imply that w.h.p.,  the number of star nodes does not exceed its expected number  ($\sqrt{n} / \log^{1/2} n$) by more than a constant factor, hence there are no more than $O(\sqrt{n} / \log^{1/2} n)$ phases of type A. Before and after each such phase, the number of outgoing edges to low-degree nodes is reduced by at least a fraction of $1/4$; hence, there are no more than $\log_{4/3} n^2=O(\log n)$ phases of type B between phases of type A.
\end{proof}

Finally, we count the number of messages needed to compute the spanning tree. 
\begin{lemma} \label{l:countMessages}
The overall number of messages is $O(n^{3/2} \log^{3/2} n)$.
\end{lemma}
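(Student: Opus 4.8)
The plan is to partition every message ever sent by FindST-Leader into three groups and bound each separately, then sum: (i) the initial $\langle \textit{Low-degree}\rangle$ and $\langle \textit{Star}\rangle$ broadcasts; (ii) the messages internal to the per-phase subroutines $Expand$, $FindAny$ (in \textbf{Search}), $ApproxCut$ and $ThresholdDetection$; and (iii) the $\langle \textit{Expand}\rangle$ invitations carried over non-tree edges. Throughout I would use Lemma~\ref{l:countPhases} (the number of phases is $O(\sqrt n\log^{1/2}n)$), the degree threshold $\sqrt n\log^{3/2}n$ separating low- from high-degree nodes, and a Chernoff bound giving that w.h.p.\ the number of star nodes is $O(\sqrt n/\log^{1/2}n)$.

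For group (i): each low-degree node has fewer than $\sqrt n\log^{3/2}n$ incident edges, so over all $\le n$ low-degree nodes the $\langle \textit{Low-degree}\rangle$ messages number at most $n\cdot\sqrt n\log^{3/2}n=O(n^{3/2}\log^{3/2}n)$; each of the $O(\sqrt n/\log^{1/2}n)$ star nodes sends at most $n-1$ copies of $\langle \textit{Star}\rangle$, for a dominated total $O(n^{3/2}/\log^{1/2}n)$. For group (ii): in one phase, $Expand$ broadcasts down the tree over at most $|T|-1\le n$ tree edges; \textbf{Search} invokes $FindAny$ $O(\log n)$ times at $O(n)$ messages each, i.e.\ $O(n\log n)$, plus $O(1)$ messages on each of the $O(\log n)$ sampled edges to test the external endpoint's degree; $ApproxCut$ costs $O(n\log n)$; and $ThresholdDetection$ costs $O(|T|\log n)=O(n\log n)$ — here the ``trigger'' messages are subsumed because, since $r\ge k/64$, the heads-probability $\min\{c\log n/r,1\}$ yields expected $O(\log n)$ triggers and a Chernoff bound caps the actual count at $O(\log n)$ w.h.p., each travelling $O(n)$ hops up $T$. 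Messages from a previous phase's $Expand$ or $ThresholdDetection$ still in transit when the leader opens a new phase are, by their phase label, charged to that earlier phase. Hence each phase costs $O(n\log n)$ and, by Lemma~\ref{l:countPhases}, group (ii) totals $O(n^{3/2}\log^{3/2}n)$.

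For group (iii): an $\langle \textit{Expand}\rangle$ invitation travels over a non-tree edge $e=(u,v)$ only either (a) because $e$ is on $Found_L$ or $Found_O$ of the endpoint lying in $T$, or (b) because $u$, a newly joined low-degree or star node, forwards $\langle \textit{Expand}\rangle$ to all its neighbours. In case (a), $e$ was added to a $Found$ list at most once per $\langle \textit{Low-degree}\rangle$ and at most once per $\langle \textit{Star}\rangle$ it carried, plus once per time the leader's sampling returned it; apart from sampling, such an $e$ is incident to a low-degree or a star node, and in case (b) $e$ is incident to the low-degree or star node $u$. So, outside of sampling, every edge carrying an $\langle \textit{Expand}\rangle$ invitation is incident to a low-degree node or to a star node. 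The edges incident to low-degree nodes number at most $\sum_{u\text{ low-deg}}\deg(u)<n\cdot\sqrt n\log^{3/2}n$, those incident to star nodes at most $n\cdot O(\sqrt n/\log^{1/2}n)$ w.h.p., and each is charged $O(1)$ invitations by the above; the sampling contribution is $O(1)$ per $FindAny$ call, hence $O(\sqrt n\log^{3/2}n)$ in total. Summing (i)--(iii) gives $O(n^{3/2}\log^{3/2}n)$.

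I expect group (iii) to be the main obstacle: one must pin down, using the Invariant and the exact behaviour of $Expand$ — in particular that a high-degree non-star node, on joining $T$, forwards $\langle \textit{Expand}\rangle$ only along the edges it has already accumulated on its $Found$ lists (edges to low-degree or star neighbours), and not to all neighbours — that the set of edges ever used to relay invitations is confined to those incident to a low-degree or a star node, so that the degree threshold and the Chernoff bound on the star count actually control the count. The book-keeping for in-flight messages straddling phase boundaries, and confirming that $Expand$ itself does not re-add an already-processed edge to a $Found$ list in a later phase, are secondary subtleties that should follow from Observation~\ref{o:expand} and the one-shot nature of the $\langle \textit{Low-degree}\rangle$ and $\langle \textit{Star}\rangle$ broadcasts.
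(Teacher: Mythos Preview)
Your proposal is correct and follows essentially the same decomposition as the paper's proof: initialization messages, per-phase subroutine costs of $O(n\log n)$ multiplied by the $O(\sqrt{n}\log^{1/2}n)$ phase bound from Lemma~\ref{l:countPhases}, and the observation that the extra $\langle\textit{Expand}\rangle$ invitations over non-tree edges are confined to edges incident to low-degree or star nodes and hence cost only a constant factor over the initialization. Your group~(iii) accounting is more explicit than the paper's single-sentence treatment (``newly added low-degree and star nodes send to their neighbors \ldots\ but this adds just a constant factor to the initialization cost''), but the underlying argument is the same.
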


\begin{proof} 
The initialization requires $O(\sqrt{n} \log^{3/2} n)$ messages from $O(n)$ low-degree nodes and $O(n)$ messages from each of $O(\sqrt{n} / \log^{1/2} n)$ stars. In each phase, $Expand$ requires a number of messages which is linear in the size of $T$ or $O(n)$, except that newly added low-degree and star nodes send to their neighbors when they are added to $T$, but this adds just a constant factor to the initialization cost. $FindAny$ is repeated $O(\log n)$ times for a total cost of $O(n \log n)$ messages. $ApproxCut$ requires the same number. The Threshold Detector requires only $O(\log n)$ messages to be passed up $T$ or $O(n \log n)$ messages overall. Therefore, by Lemma \ref{l:countPhases} the number of messages over all phases is $O(n \log^{3/2} n)$.
\end{proof}

Theorem \ref{theorem} for spanning trees in connected networks with a pre-selected leader follows from Lemmas \ref{l:countMessages} and \ref{l:countPhases}.

\subsection{Proof of $ApproxCut$ Lemma} 
\label{s:approxcut}
\begin{proof}
Let $W$ be the set of the outgoing edges. For a fixed $z$ and $i$, we have:
\[ Pr(h_{z, i}(T) = 1) = Pr(\textit{an odd number of edges in W hash to } [2^i]) \geq \]
\[ Pr(\exists ! \textit{ e} \in W \textit{hashed to } [2^i]).\]
This probability is at least $1/16$ for $i = l - \lceil \log |W| \rceil - 2$ (Lemma 5 of \cite{king2015construction}). Therefore, since $X_j = \sum_{z = 1}^{c \log n}{h_{z, j}}$ (from pseudocode), $E[X_j] = \sum E[h_{z, j}] \geq c \log n / 16$, where $j =  l - \lceil \log |W| \rceil - 2$. Note that $ j = l - \lceil \log |W| \rceil - 2$ means that $\frac{2^l}{2^{j + 3}} < |W| < \frac{2^l}{2^{j + 1}}$. Consider $j - 4$. Since the probability of an edge being hashed to $[2^{j - 4}]$ is $\frac{2^{j - 4}}{2^l}$, we have
\[Pr(h_{z, j-4}(T) = 1) \leq Pr(\exists e \in W \textit{hashed to } [2^{j - 4}]) = |W| \frac{2^{j-4}}{2^l} \leq \frac{1}{2^{5}} \leq \frac{1}{32}.\]
Thus, $E[X_{j-4}] \leq c\log n/32$. Since an edge that is hashed to $[2^{j - k}]$ (for $k > 4$) is already hashed to $[2^{j - 4}]$, we have:
\[Pr(h_{z, j-4}(T) = 1 \vee \ldots \vee h_{z, 0}(T) = 1) \leq Pr(\exists e \in W \textit{hashed to } [2^{j - 4}] or \ldots or [2^{0}])) =\]
\[Pr(\exists e \in W \textit{hashed to } [2^{j - 4}]) = \frac{1}{32}.\]
Let $y_z$ be 1 if $h_{z, j-4}(T) = 1 \vee \ldots \vee h_{z, 0}(T) = 1$, and 0 otherwise. Also, let $Y = \sum_{z = 1}^{c \log n} y_z$. We have$E[Y] \leq c \log n / 32$. Also, for any positive integer $a$, 
\[ Pr(X_{j - 4} > a \vee \ldots \vee X_{0} > a) \leq Pr(Y > a).\]
From Chernoff bounds:
\[Pr(X_j < (3/4) c \log n /16) = Pr(X_j < (3/4) E[X_j]) < 1/n^{c'}\] and,
\[Pr(X_{j-4} > (3/2) c \log n / 16 \vee \ldots \vee X_{0} > (3/2) c \log n / 16) \leq Pr(Y > (3/2) c \log n / 16) = \]
\[Pr(Y > (3/2) c \log n / 32) < Pr(Y > (3/2) E[Y]) < 1/n^{c'}.\]
Therefore, by finding the smallest $i$ (called $min$ in pseudocode) for which $X_i > (3/2) c \log n/ 16$, w.h.p. $min$ is in $[j - 3, j]$. As a result, $2 |W| \leq 2^{l - min} \leq 64 |W|$. Therefore, $ |W|/ 32 \leq 2^{l - min} / 64 \leq |W|$. \par 
Furthermore, broadcasting each of the $O(\log n)$ hash functions and computing the corresponding vector takes $O(n)$ messages; so, the lemma follows.
\end{proof}
\old{\end{lemma}}

\subsection{Pseudocode}\label{s:pseudocode}
\begin{algorithm}
\caption{Initialization of the spanning tree algorithm}
\begin{algorithmic}[1]
\Procedure{Initialization}{}
\label{AinitAlg}
\State Every node selects itself to be a \emph{star} node with probability of $1 / \sqrt{n \log n}$.
\State Nodes that have degree $< \sqrt{n} \log^{3/2} n$ are \emph{low-degree} nodes. Otherwise, they are \emph{high-degree} nodes. (Note that they may also be star nodes at the same time.)
\State Star nodes send $\langle \textit{Star} \rangle$ messages to all of their neighbors.
\State Low-degree nodes send $\langle \textit{Low-degree} \rangle$ messages to all of their neighbors (even if they are star nodes too).
\EndProcedure
\end{algorithmic}
\end{algorithm}

\def\NoNumber#1{{\def\alglinenumber##1{}\State #1}\addtocounter{ALG@line}{-1}}

\begin{algorithm}
\caption{Asynchronous protocol for the leader to find a spanning tree.}
\begin{algorithmic}[1]
\Procedure{FindST-Leader}{}
\label{findMST}
\State Leader initially adds all of its incident edges to its $Found_L$ list. \texttt{// By exception leader does not need to differentiate between $Found_L$ and $Found_O$}
\State {$i \leftarrow 0$}
\Repeat {\textbf{ (Phase $i$)}}
\State $i \leftarrow i + 1$.
\State Leader calls $Expand()$. \texttt{// Expansion}
\label{expandLine}
\NoNumber{\texttt{// Search and Sampling:}}
\State $counter \leftarrow 0, A \leftarrow \emptyset$.
\label{firstlineafter}
\While{$counter <  48 c \log n$}
\label{sampleloop}
\State $FindAny(E \setminus A)$.
\If {$FindAny$ is successful and finds an edge $(u, v)$ ($u \in T$ and $v \notin T$)}
\State $u$ sends a message to $v$ to query $v$'s degree, and sends it to the leader.
\State $u$ adds $(u, v)$ to either $Found_L(u)$ or $Found_O(u)$ based on $v$'s degree.
\EndIf
\State $counter \leftarrow counter + 1$.
\EndWhile
\If {$|A| = 0$} 
\State {\bf terminate the algorithm} as there are no outgoing edges.
\label{nooutgoing}
\ElsIf {$|A| < 2 \log n $ (few edges) \textbf{or} $\exists (u, v) \in A$ s.t. $v$ is high-degree}
\label{onwards}
\State Leader starts a new phase to restore the Invariant.
\Else {\textit{ }(at least half of the outgoing edges are to low-degree nodes) \texttt{// Wait:} }
\State $r \leftarrow ApproxCut() / 2$.
\label{approxline}
\State Leader calls $ThresholdDetection(r)$.
\State Leader waits to trigger and then starts a new phase.
\label{waitperiod}
\EndIf
\Until
\EndProcedure
\end{algorithmic}
\end{algorithm}

\begin{algorithm}
\caption{Given $r$ at phase $i$, this procedure detects when nodes in $T$ receive at least $r/4$ $\langle Low-degree \rangle$ messages over outgoing edges. $c$ is a constant.}
\begin{algorithmic}[1]
\Procedure{ThresholdDetection}{}
\State Leader calls Broadcast($\langle \textit{Send-trigger}, r, i \rangle$). 
\State When a node $u \in T$ receives $\langle \textit{Send-trigger}, r, i \rangle$, it first participates in the broadcast. Then, for every event, i.e. every edge $(u, v) \in {Found(u)}_{L}$ such that $v \notin \textit{T-neighbor}(u)$, $u$ sends to its parent a $\langle Trigger, i \rangle $ message with probability of $c \log n / r$.
\State A node that receives $\langle Trigger, i \rangle $ from a child keeps sending up the message until it reaches the leader. If a node receives an $\langle \textit{Expand} \rangle$ before it sends up a $\langle \textit{Trigger}, i \rangle$, it discards the $\langle \textit{Trigger}, i \rangle$ messages as an Expand has already been triggered.
\State Once the leader receives at least $c \log n/2$ $\langle \textit{Trigger}, i \rangle$ messages, the procedure \textbf{terminates} and the control is returned to the calling procedure.
\EndProcedure
\end{algorithmic}
\end{algorithm}

\begin{algorithm} \label{expand}
\caption{Leader initiates Expand by sending $\langle \textit{Expand} \rangle$ to all of its children. If this is the first Expand, leader sends to all of its neighbors. Here, $x$ is any non-leader node.}
\begin{algorithmic}[1]
\Procedure{Expand}{}
\label{expandAlg}
\State When node $x$ receives an $\langle Expand \rangle $ message over an edge $(x, y)$:
\State $x$ adds $y$ to $\textit{T-neighbor}(x)$.
\If{$x$ is not in $T$}
\State The first node that $x$ receives $\langle \textit{Expand} \rangle$ from becomes $x$'s parent. \texttt{//$x$ joins $T$}
\If{$x$ is a high-degree node \textbf{and} $x$ is not a star node}
\State It waits to receive a $\langle \textit{Star} \rangle$ over some edge $e$, then adds $e$ to $Found_O(x)$.
\label{expandWait}
\State It forwards $\langle \textit{Expand} \rangle$ over edges in $Found_L(x)$ and $Found_O(x)$ (only once in 
\Statex \qquad \qquad \quad case an edge is in both lists), then removes those edges from the Found lists.
\Else \textit{ }($x$ is a low-degree or star node)
\State It forwards the $\langle \textit{Expand} \rangle$ message to all of its neighbors.
\EndIf
\Else \textit{ }($x$ is already in $T$)
\State If the sender is not its parent, it sends back $\langle \textit{Done-by-reject} \rangle$.  Else, it forwards 
\Statex \qquad \qquad $\langle \textit{Expand} \rangle$ to its children in $T$, over the edges in $Found_L(x)$ and $Found_O(x)$,
\Statex \qquad \qquad then removes those edges from the Found lists.
\EndIf
\Statex \parbox[t]{\dimexpr\linewidth-\algorithmicindent}{ \texttt{// Note that if $x$ added more edges to its Found list after forward of $\langle \textit{Expand} \rangle$, the new edges will be dealt with in the next Expand.} \strut}
\State \parbox[t]{\dimexpr\linewidth-\algorithmicindent}{When a node receives $\langle Done \rangle $ messages (either $\langle \textit{Done-by-accept} \rangle$ or $\langle \textit{Done-by-reject} \rangle$) from all of the nodes it has sent to, it considers all nodes that have sent $\langle \textit{Done-by-accept} \rangle$ as its children. Then, it sends up $\langle \textit{Done-by-accept} \rangle$ to its parent. \strut}
\State The algorithm terminates when the leader receives $\langle Done \rangle $ from all of its children.
\EndProcedure
\end{algorithmic}
\end{algorithm}

\begin{algorithm}
\caption{Approximates the number of outgoing edges within a constant factor. $c$ is a constant.}
\begin{algorithmic}[1]
\Procedure{ApproxCut}{$T$}
\State Leader broadcasts $c \log n$ random 2-wise independent hash functions defined from $[1, n^{2c}] \rightarrow [2^l]$.
\State For node $y$, and hash function $h_z$ vector $\overrightarrow{h_z}(y)$ is computed where $h_{z,i}(y)$ is the parity of incident edges that hash to $[2^i]$, $i = 0, \ldots, l$.
\State For hash function $h_z$, $\overrightarrow{h_z}(T) = \oplus_{y \in T}{\overrightarrow{h_z}(y)}$ is computed in the leader.
\State For each $i = 0, \ldots, l$, $X_i = \sum_{z = 1}^{c \log n}{h_{z, i}(T)}$.
\State Let $min$ be the smallest $i$ s.t. $X_i \geq (3/4) c \log n / 16$.
\State Return $2^{l - min} / 64$.
\EndProcedure
\end{algorithmic}
\end{algorithm}

\section{Finding MST with $o(m)$ asynchronous communication}
\label{MSTconstruction}
The MST algorithm implements a version of the GHS algorithm which grows a forest of disjoint subtrees (``fragments'')  of the MST in parallel. We reduce the message complexity of GHS by using $FindMin$ to find minimum weight outgoing edges \emph{without} having to send messages across every edge. But, by doing this, we require the use of a spanning tree to help synchronize the growth of the fragments.

 Note that GHS nodes send messages along their incident edges for two main purposes: (1) to see whether the edge is outgoing, and (2) to make sure that fragments with higher rank are slowed down and do not impose a lot of time and message complexity. Therefore, if we use $FindMin$ instead of having nodes to send messages to their neighbors, we cannot make sure that higher ranked fragments are slowed down. Our protocol works in phases where in each phase only fragments with smallest ranks continue to grow while other fragments wait. A spanning tree is used to control the fragments based on their rank. (See \cite{gallager1983distributed} for the original GHS.) \\ \\
{\it Implementation of FindMST:} Initially, each node forms a fragment containing only that node which is also the leader of the fragment and fragments all have rank zero. A \emph{fragment identity} is the node ID of the fragment's leader; all nodes in a fragment know its identity and its current rank. Let the pre-computed spanning tree $T$ be rooted at a node $r$, All fragment leaders wait for instructions that are broadcast by $r$ over $T$. \par 
The algorithm runs in phases. At the start of each phase, $r$ broadcasts the message $\langle \textit{Rank-request} \rangle$ to learn the current minimum rank among all fragments after this broadcast. Leaves of $T$ send up their fragment rank. Once an internal node in $T$ receives the rank from all of its children (in $T$) the node sends up the minimum fragment rank it has received including its own. This kind of computation is also referred to as a \emph{convergecast}. \par
Then, $r$ broadcasts the message $\langle Proceed, minRank \rangle$ where $minRank$ is the current minimum rank among all fragments. Any fragment leader that has rank equal to $minRank$, \emph{proceeds} to finding minimum weight outgoing edges by calling FindMin on its own fragment tree. These fragments then send a $\langle Connect \rangle$ message over their minimum weight outgoing edges. When a node $v$ in fragment $F$ (at rank $R$) sends a $\langle Connect \rangle$ message over an edge $e$ to a node $v'$ in fragment $F'$ (at rank $R'$), since $R$ is the current minimum rank, two cases may happen: (Ranks and identities are updated here.)
\begin{enumerate}
\item \textbf{$R < R'$}: In this case, $v'$ answers immediately  to $v$ by sending back an $\langle Accept \rangle$ message, indicating that $F$ can merge with $F'$. Then, $v$ initiates the merge by changing its fragment identity to the identity of $F'$, making $v'$ its parent, and broadcasting $F'$'s identity over fragment $F$ so that all nodes in $F$ update their fragment identity as well. Also, the new fragment (containing $F$ and $F'$) has rank $R'$.
\item \textbf{$R = R'$}: $v'$ responds  $\langle Accept \rangle$ immediately to $v$ if the minimum outgoing edge of $F'$ is $e$, as well. In this case, $F$ merges with $F'$ as mentioned in rule 1, and the new fragment will have $F'$'s identity. Also, both fragments increase their rank to $R' + 1$. \\
Otherwise, $v'$ does not respond to the message until $F'$'s rank increases. Once $F'$ increased its rank, it responds via an $\langle Accept \rangle$ message, fragments merge, and the new fragment will update its rank to $R'$.
\end{enumerate}
The key point here is that fragments at minimum rank are not kept waiting. Also, the intuition behind rule 2 is as follows. Imagine we have fragments $F_1, F_2, ..., F_k$ which all have the same rank and $F_i$'s minimum outgoing edge goes to $F_{i+1}$ for $i \leq k - 1$. Now, it is either the case that $F_k$'s minimum outgoing edge goes to a fragment with higher rank or it goes to $F_k$. In either case, rule 2 allows the fragments $F_{k - 1}, F_{k-2}, \ldots$ to update their identities in a cascading manner right after $F_k$ increased its rank. \par 
When all fragments finish their merge at this phase they have increased their rank by at least one. Now, it is time for $r$ to star a new phase. However, since communication is asynchronous we need a way to tell whether all fragments have finished. In order to do this, $\langle Done \rangle $ messages are convergecast in $T$. Nodes that were at minimum rank send up to their parent in $T$ a $\langle Done \rangle $ message only after they increased their rank and received $\langle Done \rangle $ messages from all of their children in $T$. \par 
As proved in Lemma \ref{maintheo}, this algorithm uses $\tilde{O}(n)$ messages.

\begin{algorithm}
\caption{MST construction with $\tilde{O}(n)$ messages. $T$ is a spanning tree rooted at $r$.}
\begin{algorithmic}[1]
\Procedure{FindMST}{}
\State All nodes are initialized as fragments at rank 0.
\NoNumber{\texttt{// Start of a phase}}
\State $r$ calls Broadcast($\langle \textit{Rank-request} \rangle$), and $minRank$ is computed via a convergecast.
\label{levelreq}
\State $r$ calls Broadcast($\langle Proceed, minRank \rangle $).
\State Fragment leaders at rank $minRank$ that have received the $\langle Proceed, minRank \rangle$ message, call FindMin. Then, these fragments merge by sending \emph{Connect} messages over their minimum outgoing edges. If there is no outgoing edge the fragment leader \textbf{terminates the algorithm}.
\State
Upon receipt of $\langle Proceed, minRank \rangle$, a node $v$ does the following: \\
If it is a leaf in $T$ at rank $minRank$, sends up $\langle Done \rangle$ after increasing its rank. \\
If it is a leaf in $T$ with a rank higher than $minRank$, it immediately sends up $\langle Done \rangle$. \\
If it is not a leaf in $T$, waits for $\langle Done \rangle $ from its children in $T$. Then, sends up the $\langle Done \rangle $ message after increasing its rank.
\State $r$ waits to receive $\langle Done \rangle $ from all of its children, and starts a new phase at step \ref{levelreq}.
\EndProcedure
\end{algorithmic}

\end{algorithm}

\begin{lemma}
\label{maintheo}
FindMST uses $O(n \log^3 n / \log \log n)$ messages and finds the MST w.h.p.
\begin{proof}
All fragments start at rank zero. Before a phase begins, two broadcasts and convergecasts are performed to only allow fragments at minimum rank to proceed. This requires $O(n)$ messages. In each phase, finding the minimum weight outgoing edges using FindMin takes $O(n \log^2 n / \log \log n)$ over all fragments. Also, it takes $O(n)$ for the fragments to update their identity since they just have to send the identity of the higher ranked fragment over their own fragment. As a result, each phase takes $O(n \log^2 n / \log \log n)$ messages. \par 
A fragment at rank $R$ must contain at least two fragments with rank $R - 1$; therefore, a fragment with rank $R$ must have at least $2^R$ nodes. So, the rank of a fragment never exceeds $\log n$. Also, each phase increases the minimum rank by at least one. Hence, there are at most $\log n$ phases. As a result, message complexity is $O(n \log^3 n / \log \log n)$.

\end{proof}
\end{lemma}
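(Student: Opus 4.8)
The plan is to prove the two assertions of Lemma~\ref{maintheo} separately: the $O(n\log^3 n/\log\log n)$ message bound, and that the algorithm outputs the (unique) MST with high probability.

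\textbf{Message bound.} First I would bound the number of phases by $O(\log n)$. The standard Boruvka/GHS observation is that a fragment of rank $R$ is created only by merging two fragments of rank $R-1$ (absorbing strictly lower-rank fragments does not help), so a rank-$R$ fragment contains at least $2^R$ nodes and no rank exceeds $\log n$. Moreover, in a single phase every fragment that is at the current minimum rank $minRank$ ends the phase at rank strictly greater than $minRank$: it either merges with an equal-rank fragment (both jump to $minRank+1$) or merges into a strictly higher-rank fragment (adopting that larger rank), and the $\langle \textit{Done} \rangle$ convergecast is designed so that $r$ does not open the next phase until every such node has increased its rank. Hence the minimum rank strictly increases each phase, giving at most $\log n$ phases. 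For the per-phase cost I would simply tally the pieces: the $\langle \textit{Rank-request} \rangle$ broadcast with its convergecast, the $\langle \textit{Proceed} \rangle$ broadcast, the $\langle \textit{Connect} \rangle/\langle \textit{Accept} \rangle$ exchanges together with the in-fragment identity rewrites (each node rewrites its identity $O(1)$ times per phase), and the $\langle \textit{Done} \rangle$ convergecast each cost $O(n)$, since they travel along the fixed spanning tree $T$ or inside the node-disjoint fragments. The dominant term is the batch of FindMin calls: the active fragments are node-disjoint and FindMin on a tree of $s$ nodes costs $O(s\log^2 n/\log\log n)$, so one phase's worth of FindMin calls costs $O(n\log^2 n/\log\log n)$. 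Multiplying by the $O(\log n)$ phases gives $O(n\log^3 n/\log\log n)$.

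\textbf{Correctness.} I would show the algorithm faithfully realizes Boruvka's method and then invoke the cut property. Since edge weights are made unique, the MST is unique; I maintain the invariant that every fragment is a subtree of this MST, which holds trivially for the initial singletons. Whenever a fragment $F$ sends $\langle \textit{Connect} \rangle$ over its minimum-weight outgoing edge $e$, $e$ is the lightest edge of the cut separating $F$ from $V\setminus F$, hence $e$ is an MST edge; joining two MST-subtrees along an MST edge again yields an MST-subtree, so the invariant persists. No cycle is ever formed: following minimum-outgoing-edge pointers among the fragments that act in a phase, the weights strictly decrease (the pointed-to fragment also has that edge outgoing, and weights are unique), so the only possible cycle is a length-two mutual minimum, which contributes just one edge. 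Since the cut property forces every MST edge to be selected by some fragment eventually, and the algorithm halts only when some fragment leader finds no outgoing edge — i.e., a single fragment spanning $V$ — that final fragment is exactly the MST, and each node learns precisely its incident merge edges. For the randomized part, there are $O(\log n)$ phases with at most $n$ FindMin calls apiece; each succeeds with probability $1-n^{-c}$, so choosing the constant inside FindMin large enough and taking a union bound makes all calls correct w.h.p.

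\textbf{The main obstacle} is deadlock-freedom under asynchrony — in particular that rule~2 (equal ranks with a non-mutual minimum edge, where $v'$ withholds its $\langle \textit{Accept} \rangle$ until $F'$'s rank rises) cannot stall every minimum-rank fragment indefinitely. I would argue this via the chain argument sketched after the pseudocode: among the fragments currently at the minimum rank, direct an edge from each fragment to the fragment holding the other endpoint of its minimum outgoing edge; walking this functional graph from any vertex must end either at a fragment whose minimum edge leads to a strictly higher rank (it merges and bumps up at once) or at a mutual-minimum pair (both merge and bump to $minRank+1$). So at least one minimum-rank fragment always makes progress; repeating, the minimum rank is fully exhausted within the phase and the withheld $\langle \textit{Accept} \rangle$ messages fire in the resulting cascade. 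Turning this into a rigorous invariant — checking that a fragment that has just absorbed others still answers consistently, and that the $\langle \textit{Done} \rangle$ convergecast cannot terminate a phase before such cascades finish — is the subtle part; the message counting is routine once the $O(\log n)$-phase bound is secured.
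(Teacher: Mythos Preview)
Your proposal is correct and follows essentially the same approach as the paper: the message bound is obtained by the identical $O(\log n)$-phase argument (ranks at most $\log n$ via the $2^R$-size lower bound, minimum rank strictly increasing each phase) combined with the same per-phase tally dominated by the node-disjoint FindMin calls at $O(n\log^2 n/\log\log n)$. You actually give more detail than the paper does---the paper's proof covers only the message count and leaves correctness and the equal-rank deadlock-freedom argument to the surrounding text and to standard GHS reasoning, whereas you spell out the cut-property invariant, the union bound over FindMin calls, and the chain argument for rule~2 explicitly.
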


From Lemma \ref{maintheo}, Theorem \ref{theorem} for minimum spanning trees follows.

\section{Minimum Spanning Forest}
\label{disconCase}
In this section we provide an algorithm to find a minimum spanning forest when the input graph is disconnected. The algorithm still computes an MST if the graph is connected. Unlike the old algorithm in Section \ref{STconstruction}, the algorithm presented here does not require a pre-selected leader.\par
In order to simplify the description of the algorithm, we use the alternative way for counting events as mentioned in Section \ref{defs}. In fact, we assume that nodes send back acknowledgment messages when they receive a $\langle \textit{Low-degree} \rangle$ message. Therefore, we will not need the $\textit{T-neighbor}$ data structure.\par

The main idea for our algorithm is to deal with components based on their sizes. A component is considered \textit{big} if it has at least $ \sqrt{n} \log^{3/2} n$ nodes, and \emph{small} otherwise. With high probability, there is a star node inside all big components. However, a small component does not necessarily contain a star node. Moreover, small components cannot have a high-degree node since a node is high-degree if it has at least $\sqrt{n} \log^{3/2} n$ neighbors. \par
Nodes, however, do not know the size of the component they belong to. To overcome this obstacle, we are going to have two types of nodes that run different algorithms as follows.
\begin{itemize}
\item Star nodes run the FindST-leader protocol.
\item Low-degree nodes that have not been selected as a star-node (called low-degree non-star nodes) run the GHS algorithm.
\end{itemize}
Note that all nodes still send the initial $\langle \textit{Low-degree} \rangle$ and $\langle Star \rangle $ messages.\newline
Having nodes that run different algorithms in the same component will eventually result in a situation where a node is asked to participate in more that one protocol (initiated by different nodes). We call such an event a \textit{collision}. We will later explain how we can handle these collisions so that a node can decide which protocol it should be a part of. \par

The intuition behind running different protocols is as follows. If a component is small, all of its nodes are low-degree. Moreover, as mentioned before, a small component may not a have star node. Therefore, a small component should be able to compute the MST on its own. Since there is no high-degree node in a small component, the total number of edges in the component is $O(n^{3/2} \log^{3/2} n)$. Therefore, all nodes can run the GHS algorithm which uses messages proportional to the number of edges in the component, and find the MST directly. However, to ensure that the message complexity still remains $o(m)$ we will not allow any high-degree node to participate in a GHS protocol. Therefore, the GHS protocol will complete successfully if and only if it is running in a component with no high-degree and no star node.\par
On the other hand, if a component is big it has a star node w.h.p.. Therefore, in each big component, we can have a star node that acts as the leader of that component. Following the Find-ST protocol, the nodes of a big component will first find a spanning tree over the component's subgraph, and then compute the minimum spanning tree. \par
In order to implement this idea, we need a set of rules to handle the collision between protocols. Here, we do not consider an exchange of message between two nodes that are running the GHS protocol to be a collision. The GHS protocol can deal with these cases itself. \par 
In this section, we assume that fragments which grow using the FindST-leader, each have a \emph{fragment ID}. This ID is the node ID of the star node that initiated the protocol. In order to resolve collisions, there are three cases that we should consider. Imagine that node $a$ sends a message regarding the protocol it is running to node $b$. The collision resolution rules are as follows: \
\begin{enumerate}
\item When $a$ is running the FindST-leader protocol and $b$ is running the GHS protocol: in this case $b$ will stop participating in the GHS protocol and will become a part the FindST-leader protocol that $a$ is running.

\item When $a$ is running GHS and $b$ is running FindST-leader: $b$ will never respond to the message; therefore, $a$'s GHS protocol will never terminate.

\item When $a$ and $b$ are both running the FindST-leader protocols with different fragment IDs : in this case if $a$'s fragment ID is higher than $b$'s fragment ID, it can proceed and take over node $b$. Otherwise, $b$ tells $a$ that $a$'s fragment should stop running the protocol. (We will elaborate on this later.)
\end{enumerate}

Having this set of rules, we can always preserve the following invariant:\\
{\bf Invariant:} In a component with at least one star node, the star node with maximum ID (among all star nodes in that component) will become responsible for computing the MST. 

\old{The important thing here is how to handle the situations when two algorithms collide. In fact, we  GHS and FindST-Leader algorithm, and also between two FindST-Leader algorithms executed by different star nodes. \par 
If a component does not have a star or high-degree node, the minimum spanning tree in that component is found directly as all nodes in the component are running the GHS protocol. Now, imagine that a component has more than one star node and also low-degree non-star nodes. In this case, the policy is to always give priority to the FindST-Leader protocols which are run by star nodes. The reason for this is that first we want to make sure that no high-degree node runs the GHS protocol, and second that there is always a star node responsible for finding ST and MST in the component. Also, if two FindST-Leader protocols collide priority is given to the one whose leader has a smaller ID.}

\noindent
\textit{Description of the algorithm:} First, all nodes follow the initialization protocol (Algorithm \ref{AinitAlg}) as in the old algorithm. Then, each star node runs the FindST-Leader protocol to expand its fragment tree via \textit{expansion}, \textit{search and sampling}, and \textit{waiting} where all of the messages exchanged during these three phases are labeled by the fragment ID, i.e., the ID of the star node that is the leader of the fragment. \par
All low-degree non-star nodes run the GHS protocol. Note that our assumption is that all nodes are awakened simultaneously. \par 
If a component is composed of only low-degree non-star nodes, these nodes, which are all running the GHS protocol, will find the minimum spanning tree in that component directly. However, in order to obtain sublinear message complexity, we ensure that no high-degree node participates in a GHS protocol. To this end, when a high-degree node receives a message from a node running the GHS protocol, it will never respond to the message; hence, delaying the protocol forever. Note that existence of a high-degree node guarantees the existence of a star node w.h.p.. Therefore, a star node will eventually find the MST in this component. \par
As mentioned before, in order to resolve collisions we require each fragment to have an ID. Initially, each star node belongs to a singleton fragment with an ID equal to that of the star node itself.  Although initially non-star nodes do not belong to any fragment, we assume that they belong to a fragment with an ID of 0. This assignment allows fragments that are lead by a star node to take over the nodes that have not joined a fragment yet, using the third collision resolution rule (prioritizing higher fragment IDs). \par 
It is worth mentioning that a node may be contacted by a node in a different fragment through initialization messages, messages for querying the degree, or expansion messages. We do not label initialization and degree-querying messages. Therefore, a node that receives these types of messages will not consider it a collision.  This will not affect the analysis of the algorithm.\par 
In order to enforce the collision resolution rules and to keep the message complexity sublinear, we use a new method for the expansion phase. Expand-MultiLeader (Algorithm \ref{expandMulti}) allows the fragments IDs to be updated without using a lot of messages. Expand-MultiLeader will replace the old Expand algorithm in Section \ref{STconstruction}.\\ 

\noindent
{\it Implementation of Expand-MultiLeader:}
The expansion messages used in Expand-MultiLeader are accompanied by the fragment identities ($\langle Expand, ID \rangle$). Every node $v$ keeps a variable $vID$ which is the ID of the fragment it belongs to at the moment. We also use $vID$ to refer to the whole fragment whose ID is equal to $vID$. As mentioned before, for all non-star nodes $vID$ is initialized to 0. We define the \emph{reject list} data structure as follows:\par

$Reject(v)$: Contains the list of edges over which $v$ rejected an expansion message from another fragment for the first time. An edge is only added to this list if the reason of rejection is lower ID value of the other fragment. We emphasize that the edge is added only on the first occurrence of this event for each fragment.

When node $x$ (currently in fragment $xID$) receives an expansion message of $\langle Expand, tID \rangle$ from some node $t$ in fragment $tID$, the algorithm handles this message by comparing $xID$ and $tID$. There are three cases: \\
\begin{enumerate}
\item $tID < xID$: Fragment $xID$ refuses to join fragment $tID$, and signals fragment $tID$ to stop running the FindST-Leader protocol. To this end, $x$ responds to $t$'s expansion message by sending back $\langle \textit{Rejected-lower-ID}, tID \rangle$ (line \ref{rejHigher}). The leader of fragment $tID$ will eventually be notified that its fragment has contacted a fragment with higher ID when the $\langle \textit{Rejected-lower-ID}, tID \rangle$ message reaches the leader via a convergecast (line \ref{rejConvergecast}). This causes fragment $tID$ to stop running the FindST-Leader protocol (line \ref{stopProto}). \par 
Moreover, fragment $xID$ should take over the fragment $tID$. However, since fragment $xID$ may not be running an expansion at the moment, node $x$ has to \emph{remember} to take over fragment $tID$ in the next expansion. For this, $x$ adds the edge $(x,t)$ to $Reject(x)$ only if this is the \textit{first time} that $x$ sees an expansion message with identity $tID$. The reason for the first time condition is that since all nodes of fragment $tID$ are already connected via a tree structure, having access to at least one of those tree nodes would suffice to allow fragment $xID$ to take over the whole fragment later. During the next expansion, $x$ will forward the expansion messages over the edges in $Reject(x)$, as well.\par 
Notice that in this case $x$ responds to an expansion message immediately even if it is participating in another expansion at the moment. This prevents fragments from waiting in a loop for each others' expansion to finish. 

\old{
An example of this is shown in Figure \ref{LoopExample}. Consider three fragments $A$, $B$, and $C$ such that $A$'s ID is larger than $B$'s ID, and $B$'s ID is larger than $C$'s ID. In this situation, C does not respond to A since it has to finish its expansion first; however, if $\langle Expand, C \rangle$ and $\langle Expand, B \rangle$ are not rejected immediately, a loop is created.

\begin{figure*}
\caption{Fragments waiting in a loop}
\label{LoopExample}
\centering
\includegraphics[width=.75\textwidth]{LoopExample.eps}
\end{figure*}
}

\begin{figure*}
\caption{Example of quadratic message complexity}
\label{ManyMsgs}
\centering
\includegraphics[width=.75 \textwidth]{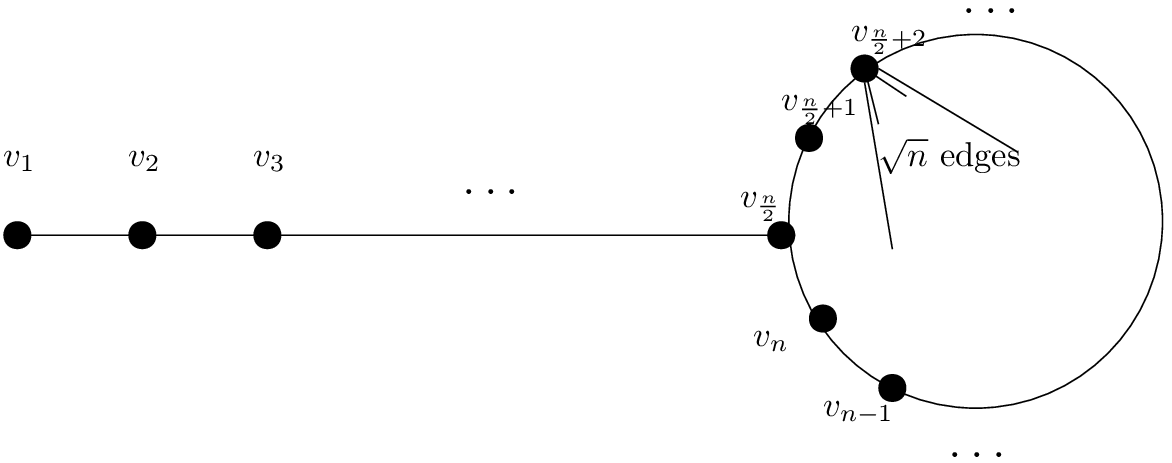}
\end{figure*}

\item $tID > xID$: In this case, $x$ joins fragment $tID$ by updating its fragment identity from $xID $ to $tID$ and setting the first node from which it has received $\langle Expand, tID \rangle$ as its parent (line \ref{joinT}). \par 
Notice that unlike the previous case, $x$, if participating in an expansion, first waits to hear the response to all expansion messages it has sent, then updates its children, and only after that it handles the $\langle Expand, tID \rangle$ message. This will ensure that the expansion messages are always forwarded over a \textit{tree} structure when it is updating the ID of another fragment. Figure \ref{ManyMsgs} shows that without this wait, the algorithm needs $\Omega(n^2)$ messages to finish. Imagine that in Figure \ref{ManyMsgs} the nodes have IDs $v_1 > v_2 >\ldots > v_n$. Let $C$ be the set of nodes $\{v_{n/2}, \ldots, v_n\}$ and all nodes in $C$ are low-degree and have $\sqrt{n}$ edges to other nodes in $C$. There have to be $\Theta(\sqrt{n \log n})$ star nodes on the path from $v_1$ to $v_{n/2}$, w.h.p.. Let us name these star nodes $s_1$ to $s_k$ from left to right, where $k = \Theta(\sqrt{n \log n})$. Assume $s_k$'s expansion message reaches to $v_{n/2}$ and is forwarded to the nodes in $C$, but before any expansion message is answered, $s_{k - 1}$'s expansion message reaches to $v_{n/2}$. Therefore, if $v_{n/2}$ does not wait for the last expansion to finish, it has to forward $\langle Expand, s_{k-1}\rangle$ to nodes in $C$ again. Besides, these are all low-degree nodes and will forward the message over all of their incident edges which requires $\Omega(n^{3/2})$ messages. Repeating this for $s_{k - 2}$ to $s_{1}$ will result in $\Omega(n^2)$ messages.

Now, there are two sub-cases for Case 2, based on whether or not $x$ is part of some fragment upon receiving $\langle Expand, tID \rangle$:\\
\begin{itemize}
\item[2.a] If $x$ is part of some other fragment $X$ (line \ref{notFirstTime}), fragment $X$ should also become a part of fragment $tID$. So, $x$ forwards  $\langle Expand, tID \rangle$ to all of its neighbors in $X$ except the one that it has received the expansion message from. This will result in all neighbors of $x$ to do the same recursively and all nodes in fragment $X$ will update their fragment identity to $tID$ and become part of fragment $tID$. Moreover, $x$ forwards the expansion message over edges in $Found(x)$ and $Reject(x)$.

\item[2.b] If $x$ did not belong to any fragment before (line \ref{isFirstTime}), the expansion message is handled just like the old Expand algorithm. Note that in this case $Reject(x)$ is an empty list. If the node is high-degree and is not a star node (line \ref{highNonStar}), it waits to receive a star message, updates its $Found_O$ list, and then forwards the message to the edges in its Found lists. Else, if $x$ is a low-degree or star node, it forwards the message to all of its neighbors in $G$.
\end{itemize} 
\item $tID = xID$: In this case, $x$ and $t$ are in the same fragment. If $t$ is $x$'s parent in fragment $tID$, $x$ forwards the expansion message to its children in the fragment, over the edges in Found lists and also over the edges in $Reject(x)$. Otherwise, the sender of the message is not $x$'s parent; therefore, $x$ responds back $\langle \textit{Reject-same-tree}, tID \rangle$.
\end{enumerate}

Once the fragment identities are compared and the expansion message is handled, $x$ waits to hear back from all the nodes it has forwarded to. Then, $x$ removes from $Reject(x)$ and the Found lists any edges it has forwarded over. Afterwards, $x$ updates its children to be all of the nodes that it has received $\langle Accept, tID \rangle$ from (line \ref{updateBranches}) in this expansion. A node only sends up an $\langle Accept, tID \rangle$ message in response to $\langle Expand, tID \rangle$ if either it is a leaf in the fragment tree or it has not received any $\langle \textit{Rejected-lower-ID}, tID \rangle$ from the nodes it has sent to in this expansion. If a node receives a $\langle \textit{Rejected-lower-ID}, tID \rangle$, it sends up $\langle \textit{Rejected-lower-ID},$ $tID \rangle$ to its parent. As a result, if any node in the fragment is rejected because of its low ID, the leader will be notified and will stop running the FindST-Leader protocol (line \ref{stopProto}).
\subsection{Correctness}
We argue about the correctness of the algorithm for the following two cases:
\begin{itemize}
\item If a component has at least one star node: In this case, because of the way we defined the collision resolution rules, eventually, the star node with the highest ID among other possible star nodes in that component will take over all other fragments and proceeds by running the FindST-leader protocol. Therefore, the correctness of the algorithm follows by the correctness of the old algorithm.

\item If a component does not have a star node: In this case, w.h.p. the component is small. Therefore, all nodes are low-degree and will run the GHS protocol. So, the correctness of the algorithm follows from the correctness of the GHS algorithm.
\end{itemize}
\subsection{Analysis}
As before, the overall number of initialization messages is still bounded by $O(n^{3/2} \log^{3/2} n)$. Also, GHS spends asymptotically the same number of messages as the number of edges in the graph running the protocol. Since no high-degree node participates in a GHS, the overall number of messages used for the GHS protocol is bounded by $O(n^{3/2} \log^{3/2} n)$. Now, we show that the overall number of messages used for running FindST-Leader's is also bounded by $O(n^{3/2} \log^{3/2} n)$. \par 
We define a \emph{successful expansion} to be one that completes without any node receiving a $\langle \textit{Rejected-lower-ID} \rangle $ message, which does not cause the leader to stop the FindST-Leader protocol. 
\begin{lemma}
\label{successfulExpands}
The number of successful expansions over all fragments is $O(\sqrt{n} \log^{1/2} n)$.
\begin{proof}
Any fragment can have at most $O(\log n)$ successful expansions before it causes the number of fragments to reduce, i.e., it causes two fragments with different identities to merge. The reason is that as we know from Lemma \ref{l:countPhases}, $O(\log n)$ expansions suffice for a fragment to find a new star (and a new fragment) with high probability, which will cause at least one fragment to stop running the protocol. Since the initial number of fragments is $O(\frac{\sqrt{n}}{\log^{1/2} n})$, the overall number of successful expansions is bounded by $O(\sqrt{n} \log^{1/2} n)$.
\end{proof}
\end{lemma}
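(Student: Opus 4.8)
The plan is to charge each successful expansion either to the (bounded) initial stock of fragments or to a reduction in the number of live fragments, using Lemma~\ref{l:countPhases} to cap how many successful expansions one fragment can perform before it is forced to trigger such a reduction. First I would fix a star node $v$ and follow the fragment carrying identity $v$ throughout the execution. By the collision-resolution rules this identity never changes while fragment $v$ is active: an expansion fragment $v$ initiates either touches only nodes of fragment $v$ itself or nodes of strictly lower-identity fragments (which then merge into $v$, keeping identity $v$), or it touches a node of a higher-identity fragment, which sends back $\langle \textit{Rejected-lower-ID}\rangle$ and causes fragment $v$ to stop the FindST-Leader protocol. Hence fragment $v$'s life is a contiguous stretch of phases in which it grows and absorbs lower-identity (necessarily star-led) fragments, ending when $v$ is rejected by a higher fragment or when $v$ discovers it has no outgoing edge (which can occur at most once per connected component). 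A successful expansion of $v$ is precisely a completed, non-aborted FindST-Leader phase for $v$.

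Next I would rerun the accounting of Lemmas~\ref{l:AB} and~\ref{l:countPhases} locally for fragment $v$. Those lemmas apply verbatim to fragment $v$'s own execution, so after at most $O(\log n)$ successful expansions of type~\textbf{B}, fragment $v$ must perform a successful expansion of type~\textbf{A}, i.e.\ it acquires a new star node. The key point specific to the multi-leader setting is that acquiring a new star node is necessarily a merge with another fragment: by the Invariant, any high-degree node already in $T$ is adjacent to a star node in $T$, so a freshly invited star lies outside fragment $v$, and as a star it belongs to some star-led fragment with positive identity (its own singleton if nothing has absorbed it yet); since the expansion pulling it in is successful, that fragment has strictly smaller identity and is absorbed into $v$. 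Consequently, writing $m_v$ for the number of fragments absorbed by fragment $v$, the successful expansions of $v$ split into at most $m_v+1$ maximal runs with no absorption, each of length $O(\log n)$, so fragment $v$ performs $O(\log n)\,(m_v+1)$ successful expansions in total (the type-\textbf{A} expansions themselves being subsumed in this count).

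Finally I would sum over all star nodes $v$. With high probability the number of star nodes, hence the number of initial star-led fragments, is $O(\sqrt{n}/\log^{1/2} n)$ by Chernoff bounds, exactly as in Lemma~\ref{l:countPhases}; and since every absorption decreases the number of live star-led fragments by at least one and no two distinct fragments can absorb the ``same'' fragment, $\sum_v m_v = O(\sqrt{n}/\log^{1/2} n)$. Therefore the total number of successful expansions over all fragments is $\sum_v O(\log n)\,(m_v+1) = O(\log n)\bigl(O(\sqrt{n}/\log^{1/2} n)+O(\sqrt{n}/\log^{1/2} n)\bigr) = O(\sqrt{n}\log^{1/2} n)$, as claimed. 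The step I expect to be the main obstacle is the middle one: verifying that the type-\textbf{A}/type-\textbf{B} dichotomy of Lemmas~\ref{l:AB}–\ref{l:countPhases} really survives the replacement of $Expand$ by Expand-MultiLeader and the possibility that fragment $v$'s own nodes are concurrently touched by other fragments' expansions, and in particular that in this setting a type-\textbf{A} event always coincides with a genuine merge of two distinct-identity (star-led) fragments and never with re-absorbing material already counted.
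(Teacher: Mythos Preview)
Your proposal is correct and follows essentially the same approach as the paper: charge each run of $O(\log n)$ successful expansions to a reduction in the number of active (star-led) fragments, of which there are initially $O(\sqrt{n}/\log^{1/2} n)$, yielding the product $O(\sqrt{n}\log^{1/2} n)$. The paper's argument is terser---it simply asserts that $O(\log n)$ successful expansions suffice (via Lemma~\ref{l:countPhases}) for a fragment to encounter a new star and hence force a merge, then multiplies by the initial fragment count---whereas you make the amortisation explicit through the variables $m_v$ and the sum $\sum_v(m_v+1)$; but the underlying accounting is the same, and your concern about whether Lemmas~\ref{l:AB}--\ref{l:countPhases} transfer to the Expand-MultiLeader setting is a point the paper also leaves implicit.
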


In any fragment, each successful expansion is followed by one \textit{search and sampling} and at most one \textit{wait} where each of which these parts uses $O(n \log n)$ messages (proof in Lemma \ref{l:countMessages}). Therefore, from Lemma \ref{successfulExpands} , the overall number of messages used in the algorithm apart from expansion is bounded by $O(n^{3/2} \log^{3/2} n)$. \par 

Finally, we prove in the following claims that the overall number of messages used for expansion is bounded by the same amount.

\begin{claim}[1]
\label{foundListClaim}
The number of forwards over edges in the Found lists, over all nodes, is bounded by $O(n^{3/2} \log^{3/2} n)$. 
\begin{proof}
After an $\langle Expand, ID \rangle $ is forwarded over an edge in the Found list, regardless of the result, that edge is removed from the Found list (line \ref{emptyFoundlist}). Therefore, we only need to show that the overall number of edges added to Found lists of all nodes is bounded by $O(n^{3/2} \log^{3/2} n)$. An edge is added to the Found list of a node either because of a $\langle \textit{Low-degree} \rangle$ message, a $\langle Star \rangle$ message, or an edge found by FindAny in FindST-Leader. The first two types of messages are bounded by $O(n^{3/2} \log^{3/2} n)$. According to Lemma \ref{successfulExpands}, there are $\tilde{O}(\sqrt{n})$ successful expansions over all fragments and each of them is followed by at most $O(\log n)$ calls to FindAny. Therefore, FindAny is responsible for at most $\tilde{O}(\sqrt{n})$ edges in the Found lists and the claim follows.
\end{proof}
\end{claim}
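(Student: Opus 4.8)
The plan is to reduce the count of forwards to a count of \emph{insertions} into $Found$ lists. A node removes an edge $e$ from $Found_L(x)$ or $Found_O(x)$ as soon as it has forwarded an $\langle Expand, ID\rangle$ message over $e$, and it never forwards over an edge that is not currently on one of its $Found$ lists; moreover an edge lying in both $Found_L(x)$ and $Found_O(x)$ is forwarded over only once. Hence each forward over a $Found$-list edge can be charged to a distinct earlier insertion of that edge into a $Found$ list, and it suffices to bound, over all nodes and all of time, the total number of such insertions.

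Next I would partition the insertions according to their cause. By construction an edge $(u,v)$ is placed on a $Found$ list of its endpoint only when (i) that endpoint receives a $\langle \textit{Low-degree} \rangle$ message over the edge, (ii) it receives a $\langle \textit{Star} \rangle$ message over the edge, or (iii) the edge is returned to the leader by a call to $FindAny$ during the \textbf{Search and Sampling} part of a phase and then routed to the appropriate endpoint. For (i): a $\langle \textit{Low-degree} \rangle$ message is sent only by a low-degree node, once per incident edge, and a low-degree node has at most $\sqrt{n}\log^{3/2}n$ incident edges by the initialization threshold; over at most $n$ such nodes this is $O(n^{3/2}\log^{3/2}n)$ insertions. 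For (ii): a $\langle \textit{Star} \rangle$ message is sent only by a star node, at most $n-1$ times, and w.h.p. there are only $O(\sqrt{n}/\log^{1/2}n)$ star nodes (Chernoff, as in Lemma~\ref{l:countPhases}), giving $O(n^{3/2}/\log^{1/2}n)$ insertions, which is dominated.

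For (iii) I would invoke Lemma~\ref{successfulExpands}: a $FindAny$ call is issued only inside the \textbf{Search and Sampling} part of a phase, and a phase reaches that part only after a \emph{successful} expansion, since an unsuccessful one delivers a $\langle \textit{Rejected-lower-ID} \rangle$ message up to the leader, which then halts the FindST-Leader protocol for that fragment before any further $FindAny$ is invoked. Lemma~\ref{successfulExpands} bounds the total number of successful expansions over all fragments by $O(\sqrt{n}\log^{1/2}n)$, and each such expansion is followed by the $48c\log n = O(\log n)$ iterations of the \textbf{while} loop of FindST-Leader, each of which adds at most one edge to a $Found$ list; this yields $O(\sqrt{n}\log^{3/2}n)$ insertions of type (iii). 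Summing the three bounds gives $O(n^{3/2}\log^{3/2}n)$ forwards over $Found$-list edges, as claimed. The point I expect to be the main obstacle — and would check most carefully — is that nothing \emph{other} than (i)--(iii) ever inserts an edge into a $Found$ list; in particular, that re-forwarding $\langle Expand \rangle$ over $Found$ or $Reject$ edges during a later expansion never re-populates a $Found$ list, and that the per-node degree bound used in (i) is exactly the threshold $\sqrt{n}\log^{3/2}n$ fixed at initialization, so that the type-(i) count matches rather than exceeds the target.
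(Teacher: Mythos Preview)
Your proposal is correct and follows essentially the same argument as the paper's proof: reduce forwards to insertions via the removal at line~\ref{emptyFoundlist}, then case-split insertions into $\langle\textit{Low-degree}\rangle$, $\langle\textit{Star}\rangle$, and $FindAny$ sources, bounding the first two by the initialization message counts and the third via Lemma~\ref{successfulExpands}. Your version is in fact slightly more explicit than the paper's on two points --- you spell out the separate $O(n^{3/2}\log^{3/2}n)$ and $O(n^{3/2}/\log^{1/2}n)$ counts for (i) and (ii), and you justify why $FindAny$ is invoked only after \emph{successful} expansions --- both of which the paper leaves implicit.
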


\begin{claim}[2]
The number of forwards over edges in the Reject lists, over all nodes, is bounded by $O(n^{3/2} /  \log^{1/2} n)$.
\begin{proof}
A node $x$ that receives $\langle Expand, tID \rangle $ over edge $e$, and rejects it because $tID < xID$, may only add $e$ to its Reject list if this is the first time it receives an expansion message with identity $tID$. Therefore, a node $x$ adds $O(\sqrt{n} / \log^{1/2} n)$ (the initial number of fragments) edges to its Reject list over the course of the algorithm. Also, upon forwarding over some edge $e$ in the Reject list, the node removes $e$ from the list; therefore, the overall number of such forwards is $O(n^{3/2} / \log^{1/2} n)$
\end{proof}
\end{claim}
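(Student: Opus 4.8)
The plan is to bound the number of forwards over Reject-list edges by the number of \emph{insertions} into Reject lists, and then bound those insertions by a counting argument over (node, fragment identity) pairs. First I would isolate the rule governing Reject insertions: a node $x$ adds an edge $e=(x,t)$ to $Reject(x)$ only when it receives $\langle Expand, tID\rangle$ with $tID < xID$ \emph{and} this is the first expansion message carrying the identity $tID$ that $x$ has ever received. Consequently, for each fixed node $x$ and each fixed fragment identity, at most one edge is ever placed in $Reject(x)$ on account of that identity. I would also record the matching deletion rule: whenever $x$ forwards an $\langle Expand,\cdot\rangle$ message over an edge currently in $Reject(x)$, it removes that edge from the list in the same cleanup step (line~\ref{emptyFoundlist}). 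Hence every forward over a Reject edge consumes one distinct earlier insertion, so it suffices to bound the number of insertions.

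Next I would count the fragment identities that can ever appear. Each identity is either the dummy value $0$ assigned to not-yet-joined nodes or the ID of some star node, and each star node launches exactly one FindST-Leader fragment; so the number of distinct identities is at most one plus the number of star nodes, which by the Chernoff estimate already invoked in Lemma~\ref{l:countPhases} is $O(\sqrt{n}/\log^{1/2} n)$ w.h.p. Combining with the previous step, each of the $n$ nodes performs $O(\sqrt{n}/\log^{1/2} n)$ Reject insertions over the whole run, so the total number of insertions --- and therefore the total number of Reject-list forwards --- is $O(n^{3/2}/\log^{1/2} n)$.

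The step I expect to be the crux is the bookkeeping in the first part rather than any genuine estimate: one must be sure the ``first occurrence'' condition is read per fragment identity and not merely per edge, and that the post-forward deletion is actually performed on every code path of Expand-MultiLeader that forwards over a Reject edge (the $tID>xID$ branch with $x$ already in a fragment, and the $tID=xID$ branch, both of which forward over $Reject(x)$, together with the generic ``$x$ removes from $Reject(x)$ any edges it has forwarded over''). An identity-oblivious accounting would only recover the trivial $O(n^{3/2}\log^{3/2} n)$ bound coming from low-degree nodes' incident edges; the entire saving is that the cost of the Reject mechanism is driven by the $O(\sqrt{n}/\log^{1/2} n)$ fragment identities, not by the number of edges.
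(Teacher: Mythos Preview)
Your proposal is correct and follows essentially the same argument as the paper: bound forwards by insertions via the deletion-on-forward rule, then bound insertions per node by the number of distinct fragment identities $O(\sqrt{n}/\log^{1/2} n)$, and multiply by $n$. Your write-up is more explicit about verifying the code paths and the per-identity (not per-edge) reading of the ``first occurrence'' condition, but the structure and the key counting step are identical to the paper's proof.
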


\begin{claim}[3]
The overall number of expansion messages forwarded, when a node receives an expansion message for the \emph{first time}, is bounded by $O(n^{3/2} \log^{3/2} n)$ over all nodes.
\begin{proof}
The first time that a node receives an $\langle \textit{Expand, ID} \rangle$ message, if it is high-degree, it forwards only over its Found list, and if it is a low-degree node, it forwards to up to $\sqrt{n} \log^{3/2} n$ nodes. Using the Claim (1), the claim follows.
\end{proof}
\end{claim}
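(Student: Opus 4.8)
The plan is to collapse the count to one event per node. A node forwards expansion messages ``on first receipt'' only when it executes Case~2.b of Expand-MultiLeader, i.e.\ when it has joined no fragment yet. I would first argue this happens at most once per node: a non-star node starts with $vID=0$, every fragment ID equals the (positive) ID of some star node, so the first $\langle \textit{Expand},tID\rangle$ that reaches such a node has $tID>vID$ and falls into Case~2; since the node belongs to no fragment, it is Case~2.b; and fragment membership is acquired \emph{only} through expansion messages, so afterwards $vID>0$ and all later receipts fall into Cases~1, 2.a, or~3. Hence it suffices to bound, summed over all nodes, the number of $\langle \textit{Expand}\rangle$ messages sent in that single Case-2.b execution.

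Then I would split by node type. A low-degree node has fewer than $\sqrt{n}\,\log^{3/2} n$ incident edges and in Case~2.b forwards $\langle \textit{Expand}\rangle$ along all of them; over the at most $n$ low-degree nodes this contributes $O(n^{3/2}\log^{3/2} n)$ messages. A high-degree non-star node, after possibly waiting for a $\langle \textit{Star}\rangle$ message, forwards only along the edges currently in its $Found_L$ and $Found_O$ lists; so the total contribution of all such nodes is at most the total number of forwards ever performed over $Found$-list edges, which is $O(n^{3/2}\log^{3/2} n)$ by Claim~1. (A star node is born inside its own singleton fragment and so never reaches Case~2.b at all; even charging it anyway, there are only $O(\sqrt{n}/\log^{1/2} n)$ star nodes with at most $n$ incident edges each, an extra $O(n^{3/2}/\log^{1/2} n)$.) Summing these bounds yields $O(n^{3/2}\log^{3/2} n)$.

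The only points needing care are the two reductions above. First, one must make airtight that ``receiving an expansion message for the first time'' really coincides with exactly one Case-2.b execution per node — this is where I expect the main (though modest) obstacle to be, and it rests on all fragment IDs being positive and on membership being obtained solely via expansion. Second, the high-degree non-star contribution must be charged to Claim~1 (the $Found$-list bound) rather than to the degrees of those nodes, which could be as large as $n$. Everything else is a one-line counting estimate, so the write-up should be short.
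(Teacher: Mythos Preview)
Your proposal is correct and follows essentially the same approach as the paper's proof: split by node type, charge low-degree nodes to their $\sqrt{n}\log^{3/2} n$ degree bound, and charge high-degree (non-star) nodes to the Found-list bound of Claim~1. Your write-up is in fact more careful than the paper's two-line argument, since you make explicit why ``first receipt'' coincides with exactly one Case~2.b execution per node and you handle the star-node case separately (the paper's statement that a high-degree node ``forwards only over its Found list'' tacitly relies on high-degree star nodes never reaching Case~2.b, which you justify).
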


Finally, the following claim will bound the number of forwards \emph{after the first time} and allows us to bound the message complexity of the algorithm.

\begin{claim}[4]
The number of forwards over branches of the old fragment in case that $tID > xID$ (line \ref{tochildrenless}), and also the number of forwards over branches of the current fragment in case that $tID = xID$ (line \ref{tochildrenequal}) is bounded by $O(n^{3/2} \log^{1/2} n)$.
\begin{proof}
\textit{Forwards over incident edges in the old fragment when $tID > xID$:} Over the whole algorithm, $O(\sqrt{n}/ \log^{1/2} n)$ leaders may have grown their fragments to a tree of size  $O(n)$. Assume that each node had a set of incident edges (including the one to its parent) in each of the fragments it belonged to over the course of the algorithm. Let $C$ be the collection of all of the incident edges of all nodes in these fragments. Note that in $C$, the same edge is repeated twice for each of its endpoints and could also be repeated up to $O(\sqrt{n}/ \log^{1/2} n)$ times as part of different fragments. Since there are $O(\sqrt{n}/ \log^{1/2} n)$ fragment trees and each of them has size at most $n$, size of $C$ is the sum of the degrees of all nodes in these trees which is bounded by $O(n^{3/2} / \log^{1/2} n)$.  \par 
Consider node $x$ that is part of a fragment $X$ with identity $xID$. When $x$ receives $\langle Expand, tID \rangle $ it forwards over to its neighbors in $X$ except the one it has received the expansion message from (line \ref{tochildrenless}). Whether the result of the forward is accept or reject, this is the last time that $x$ forwards over a part of the fragment $X$, \textit{as a node in} $X$. The reason is that right after this $x$ joins fragment $T$ and updates its children (line \ref{updateBranches}). Now, any future expansion message that $x$ receives, will be forwarded as a part of fragment $T$ (and not $X$). Therefore, the overall number of expansion messages that nodes forward over their incident edges, immediately after updating their identity, could not exceed the size of $C$ which is $O(n^{3/2} / \log^{1/2} n)$.\\ \\
\noindent
\textit{Forwards to children when $tID = xID$:} In this case, $x$ only forwards the message if the sender is its parent in $T$. In fact, if $x$ receives and forwards $\langle Expand, tID \rangle $ messages $k$ times, it has to be the case that $T$ has performed at least $k - 1$ \emph{successful} expansions. Otherwise, $T$ would have stopped before starting the $k^{th}$ expansion. Since the overall number of successful expansions is $O(\sqrt{n} \log^{1/2} n)$, the overall number of forwards over incident edges when $tID = xID$ is $O(n^{3/2} \log^{1/2} n)$.
\end{proof}
\end{claim}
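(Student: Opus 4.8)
The plan is to bound the two kinds of forwards separately and then add the bounds.

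\textbf{Forwards over branches of an old fragment when $tID > xID$.} The crucial structural fact is that, by the ordering enforced in this case --- a node first waits for the responses to every expansion message it has outstanding, then fixes its children, and only then adopts the identity $tID$ --- once a node $x$ has forwarded the takeover over its branches in its current fragment $X$ it \emph{permanently} leaves $X$: any later expansion message reaching $x$ is processed as a member of $tID$ (or of whatever fragment subsequently absorbs $x$), never again as a member of $X$. Hence $x$ forwards over the branches of any one fragment at most once during the whole execution. I would charge each such forward to the pair (node, incident tree-edge) in the fragment in which it occurs. Since w.h.p.\ there are only $O(\sqrt n/\log^{1/2} n)$ star nodes, at most that many FindST-Leader fragment trees are ever created, each of size at most $n$; the sum of tree-degrees over a tree on at most $n$ nodes is $O(n)$, so the number of such pairs, and therefore the number of forwards of this kind, is $O(n^{3/2}/\log^{1/2} n)$.

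\textbf{Forwards over branches of the current fragment when $tID = xID$.} Here a node forwards only when the sender is its parent in the fragment tree. I would argue that a fragment that begins its $k$-th expansion must have completed $k-1$ \emph{successful} expansions beforehand, because an unsuccessful expansion (one in which some node learns $\langle \textit{Rejected-lower-ID}\rangle$) causes the leader to abandon the FindST-Leader protocol and never start another expansion. Thus, apart from at most one final, possibly unsuccessful, expansion per fragment, every forward of this kind happens inside a successful expansion, and a single successful expansion contributes at most $O(n)$ such forwards (again, the sum of tree-degrees of a tree on at most $n$ nodes). By Lemma~\ref{successfulExpands} there are $O(\sqrt n\log^{1/2} n)$ successful expansions over all fragments, and the number of fragments is $O(\sqrt n/\log^{1/2} n)$, so the total is $O(n^{3/2}\log^{1/2} n)$. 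Adding the two contributions gives $O(n^{3/2}/\log^{1/2} n)+O(n^{3/2}\log^{1/2} n)=O(n^{3/2}\log^{1/2} n)$, as claimed.

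I expect the main obstacle to be making the ``$x$ leaves $X$ forever'' invariant watertight under asynchrony: one has to rule out that a stale $\langle Expand, xID\rangle$ message for the abandoned identity still arrives after $x$ has switched to $tID$ and triggers another round of forwards over $X$'s branches, and one has to check that the ``wait for all responses before updating children'' rule really does make the structure over which the takeover propagates a tree at the instant of the takeover --- this is precisely what the $\Omega(n^2)$ example is designed to rule out. Once that invariant is in place the counting is routine; I would also note in passing that the extra forwards induced by edges placed on the $Reject$ and $Found$ lists are already accounted for in the earlier claims and do not affect this count.
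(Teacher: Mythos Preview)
Your proposal is correct and follows essentially the same approach as the paper's proof: for the $tID>xID$ case you use the same ``a node forwards over the branches of any given fragment at most once'' invariant and charge to (node, tree-edge) pairs over $O(\sqrt n/\log^{1/2}n)$ fragment trees of size $O(n)$, and for the $tID=xID$ case you invoke Lemma~\ref{successfulExpands} exactly as the paper does. Your explicit accounting of the one possibly unsuccessful final expansion per fragment is a minor addition that the paper leaves implicit, and your closing remarks about the asynchrony invariant and the $\Omega(n^2)$ example identify precisely the subtlety the paper's ``wait to finish its current expansion'' rule is designed to handle.
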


Putting together Claims 1 to 4, we obtain the following lemma.

\begin{lemma}
\label{multiLeaderProof}
Expand-MultiLeader will result in no more than $O(n^{3/2} \log^{3/2} n)$ forwards of $\langle \textit{Expand, ID} \rangle$ messages over all fragments, before a spanning forest is constructed.
\end{lemma}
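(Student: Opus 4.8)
The plan is to account for \emph{every} forward of an $\langle \textit{Expand},ID\rangle$ message by classifying it according to which step of Expand-MultiLeader produced it, and then to charge each class to one of Claims~1--4. From the description of Expand-MultiLeader, a node forwards an $\langle \textit{Expand},tID\rangle$ message only in the following situations: (i) the very first time it ever receives an expansion message (case~2.b, and the analogous handling when it belongs to no fragment), where it forwards over its Found lists if high-degree or over all incident edges if low-degree; (ii) whenever it forwards over an edge currently sitting in some $Found_L$ or $Found_O$ list; (iii) whenever it forwards over an edge currently sitting in a $Reject$ list; (iv) when $tID>xID$ and $x$ already belongs to a fragment $X$, where it forwards over its incident edges in $X$ (line~\ref{tochildrenless}); and (v) when $tID=xID$ and the sender is $x$'s parent, where it forwards to its children in the fragment (line~\ref{tochildrenequal}). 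Crucially, when $tID<xID$ the node does \emph{not} forward anything: it only sends back a $\langle \textit{Rejected-lower-ID},tID\rangle$ response and records the edge for a \emph{future} expansion, and that future forward is already counted under~(iii).

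Given this partition, the bound is immediate. Class~(i) is $O(n^{3/2}\log^{3/2}n)$ by Claim~3; class~(ii) is $O(n^{3/2}\log^{3/2}n)$ by Claim~1; class~(iii) is $O(n^{3/2}/\log^{1/2}n)$ by Claim~2; and classes~(iv) and~(v) together are $O(n^{3/2}\log^{1/2}n)$ by Claim~4. Summing these, the two $\Theta(n^{3/2}\log^{3/2}n)$ terms dominate, so the total number of forwarded $\langle \textit{Expand},ID\rangle$ messages is $O(n^{3/2}\log^{3/2}n)$. Since each forward triggers at most one response (an $\langle \textit{Accept},tID\rangle$, $\langle \textit{Rejected-lower-ID},tID\rangle$, or $\langle \textit{Reject-same-tree},tID\rangle$), the total communication generated by the expansion machinery is within a constant factor of this, hence also $O(n^{3/2}\log^{3/2}n)$.

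The part that needs the most care is checking that the classification is genuinely exhaustive and free of double-counting. Two points deserve attention. First, a forward counted under Claim~3 (first receipt) must never also be counted under Claim~4: this holds because Claim~4 concerns only forwards a node makes once it \emph{already} belongs to a fragment and re-forwards after an identity change or as an internal tree node, i.e.\ strictly after its first receipt. Second, I should confirm that every forwarding line of Algorithm~\ref{expandMulti} not explicitly listed above is a forward over a Found edge or a Reject edge (the ``forwards over edges in $Found(x)$ and $Reject(x)$'' clauses that appear in case~1, case~2.a, and the $tID=xID$ case), so that it falls under Claim~1 or Claim~2. Finally, the phrase ``before a spanning forest is constructed'' needs termination: this follows from Lemma~\ref{successfulExpands}, which caps the number of successful expansions, together with the correctness argument, since every fragment either stops running FindST-Leader after a rejection or eventually owns an entire component; hence only finitely many expansions occur over the whole run, and the counts above are genuinely over the entire execution.
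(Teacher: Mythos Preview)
Your proposal is correct and follows exactly the paper's approach: the paper's own proof of this lemma is the single sentence ``Putting together Claims 1 to 4, we obtain the following lemma,'' and you have simply spelled out that assembly in detail, partitioning all forwards by the line of Expand-MultiLeader that produces them and charging each to the appropriate claim. One small slip: in your last paragraph you list ``case~1'' among the places where a ``forwards over edges in $Found(x)$ and $Reject(x)$'' clause appears, but as you yourself noted earlier, the $tID<xID$ branch performs no forward at all---the relevant clauses appear only in case~2.a and in the $tID=xID$ case---so just drop ``case~1'' from that parenthetical.
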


Therefore, we have the following theorem on building a minimum spanning forest in general input graphs.

\begin{theorem}
A minimum spanning forest can be constructed using  $O(n^{3/2} \log^{3/2} n)$ messages with asynchronous communication.
\end{theorem}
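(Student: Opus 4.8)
The plan is to assemble the final theorem from the three ingredients that have just been established: the message count for initialization, the message count for the GHS protocol run by low-degree non-star nodes, and the message count for the collection of FindST-Leader protocols run by star nodes (including the Expand-MultiLeader forwards). First I would observe that the initialization cost is unchanged from the connected case: $O(n)$ low-degree nodes each send $O(\sqrt{n}\log^{3/2}n)$ messages and $O(\sqrt{n}/\log^{1/2}n)$ star nodes each send $O(n)$ messages, for a total of $O(n^{3/2}\log^{3/2}n)$. Next, since a component running GHS to completion contains no high-degree node (any high-degree node would stall the GHS protocol forever by never responding, and w.h.p.\ would anyway force a star node to exist and take over), every edge in such a component is incident only to low-degree nodes; summing degrees over all such components gives at most $O(n \cdot \sqrt{n}\log^{3/2}n) = O(n^{3/2}\log^{3/2}n)$ edges, and GHS uses a number of messages linear in that, so the GHS contribution is within budget.

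Then I would bound the FindST-Leader contribution. By Lemma~\ref{successfulExpands} there are only $O(\sqrt{n}\log^{1/2}n)$ successful expansions across all fragments, and each is followed by one \emph{search and sampling} phase and at most one \emph{wait} phase, each costing $O(n\log n)$ messages (as in the proof of Lemma~\ref{l:countMessages}); this totals $O(n^{3/2}\log^{3/2}n)$. The remaining cost is the Expand-MultiLeader forwarding itself, which Lemma~\ref{multiLeaderProof} (assembled from Claims 1--4) bounds by $O(n^{3/2}\log^{3/2}n)$ forwards of $\langle\textit{Expand},ID\rangle$ messages, each generating $O(1)$ messages. Adding these three contributions and applying the correctness discussion (a component with a star node is handled by the highest-ID star node via the Invariant, reducing to correctness of the connected-case algorithm; a component with no star node is w.h.p.\ small, hence all low-degree, hence correctly handled by GHS) yields the theorem.

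I expect the only real subtlety — already dispatched by the preceding lemmas but worth restating carefully — to be the interaction between the two protocols and the collision-resolution rules, specifically making sure that the ``wait forever'' behavior in Case 2 of the collision rules does not create a deadlock that prevents \emph{every} component from ever terminating. The resolution is that within any component, w.h.p.\ either there is a star node (in which case the maximum-ID star's fragment eventually absorbs everything and runs FindST-Leader, which terminates) or there is none and the component is small with no high-degree node (in which case pure GHS runs to completion). The potential loop among several FindST-Leader fragments waiting on each other is broken by the immediate rejection in Case~1 ($tID < xID$), and the potential $\Omega(n^2)$ blowup from repeatedly re-forwarding expansions over a non-tree structure is prevented by the wait-for-acknowledgments rule in Case~2. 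Since all of these points are proved in the cited lemmas and claims, the proof of the theorem is essentially the bookkeeping sum $O(n^{3/2}\log^{3/2}n) + O(n^{3/2}\log^{3/2}n) + O(n^{3/2}\log^{3/2}n) = O(n^{3/2}\log^{3/2}n)$ together with the two-case correctness argument.
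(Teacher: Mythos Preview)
Your proposal is correct and follows essentially the same approach as the paper: the paper's proof of this theorem is precisely the bookkeeping sum of the initialization cost, the GHS cost (bounded because no high-degree node participates), the non-expansion FindST-Leader cost via Lemma~\ref{successfulExpands}, and the expansion cost via Lemma~\ref{multiLeaderProof}, together with the same two-case correctness argument. The only minor point left implicit in both your write-up and the paper is that in a component with a star node, after the spanning tree is built one still runs FindMST (Lemma~\ref{maintheo}) at an additional $\tilde{O}(n)$ cost, which is absorbed into the stated bound.
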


\noindent
\textit{\textbf{A note on analysis:}} Here, we provide an example that shows even when all star nodes are running the protocol in parallel, the time and the message complexity could still be as high as $\Theta(n^{3/2} \log^{3/2} n)$. Since we do not assume to know initially whether the graph $G$ is a connected graph or not, our example here is a connected graph. Consider the graph in Figure \ref{lollipop} where a complete graph $K_{n/2}$ is connected to two path graphs $P_{n/4}$ on the left and on the right, named $P_L$ and $P_R$ respectively. Assume that node IDs are $v_1 > v_2 > \ldots > v_n$. There have to be $\Theta(\sqrt{n \log n})$ star nodes in $P_L$ and in $K_{n/2}$ w.h.p.. As a result, $\Theta(n^{3/2} \log^{3/2} n)$ messages are guaranteed since the star nodes in $K_{n/2}$ send initialization messages to all of their neighbors. \par 
 We show that time complexity is $\Theta(n^{3/2} \log^{3/2} n)$, as well. Let the star nodes in $P_L$ be $s_1, s_2, \ldots, s_k$ from left to right, where $k = \Theta(\sqrt{n \log n})$. Let $\delta=1$ time step be the max delay. Suppose $s_k$'s expansion messages go all the way to the right in one time step, and span $K_{n/2}$ and $P_R$. Now, $s_{k - 1}$'s expansion reaches to $s_k$ and updates the identity of all of the nodes on the right in $O(n)$ time steps. Meanwhile, $s_{k-2}$ is expanding to the right but  according to the algorithm (line \ref{waitsToFinish}) $s_{k-1}$ waits to finish its expansion before passing on the expansion of $s_{k-2}$, so $s_{k-2}$ waits $O(n)$ time steps. Similarly each $s_{k-i}$ ($1 \leq i \leq k - 1$) must wait $O(n)$ time for $s_{k-i+1}$ to finish its expansion for a cost of $O(n)$. Moreover, each of these expansions is followed by a search and sampling that takes $O(n \log n)$ time; hence, time complexity of $\Theta(n^{3/2} \log^{3/2} n)$.

\begin{figure*}
\caption{Example of worst case time and message complexity}
\label{lollipop}
\centering
\includegraphics[width=.75\textwidth]{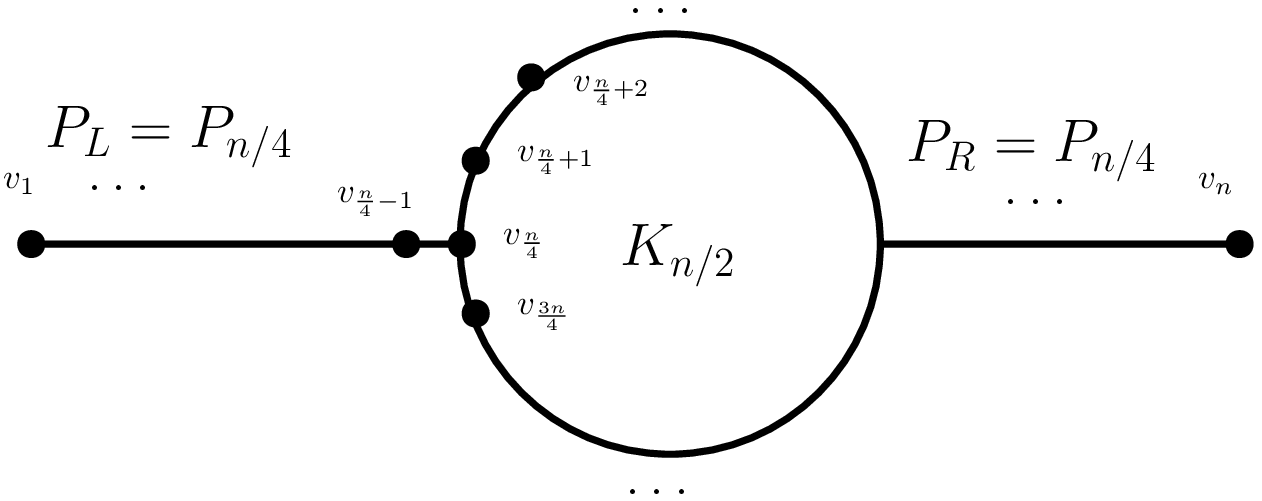}
\end{figure*}

\begin{algorithm}

\caption{Leader initiates Expand by sending $\langle Expand, tID \rangle $ to all of its children, and over edges in its $Found_L$ list.  Here, $x$ is any node.}
\label{expandMulti}

\begin{algorithmic}[1]

\Procedure{Expand-MultiLeader}{$tID$}
\State When node $x$ receives the message $\langle Expand, tID \rangle $ from node $t$ over an edge $e = (x, t)$:
\If{$tID < xID$} \texttt{//x responds immediately}
\State $x$ sends back $\langle \textit{Rejected-lower-ID}, tID \rangle$ over $e$.
\label{rejHigher}
\State If this is the first expansion message received from the fragment with identity $tID$, it adds $e$ to $Reject(x)$.
\Else
\label{waitsToFinish}
\Statex \qquad \parbox[t]{\dimexpr\linewidth-\algorithmicindent}{\texttt{//$x$ waits to finish its current expansion before handling these cases} \strut}
\If{$tID > xID$}
\label{compareID}
\State $x$ updates its fragment identity to $tID$ and the first node that $x$ receives $\langle Expand, tID \rangle $ from, becomes $x$'s parent in $T$. \texttt{//$x$ joins fragment $tID$}
\label{joinT}
\If{$x$ is part of some fragment $X$ upon receiving the expansion message}
\label{notFirstTime}
\State It forwards $\langle Expand, tID \rangle $ to its \textit{neighbors} in $X$ except the node $x$ has received the expansion message from, to the nodes in $Reject(x)$, and over edges in its Found lists.
\label{tochildrenless}
\Else
\label{isFirstTime}
\If{$x$ is a high-degree node \textbf{and} $x$ is not a star node}
\label{highNonStar}
\State If it has not received any $\langle \textit{Star} \rangle$ message yet, it waits to receive one, and \\ then adds the corresponding edge to $Found_O(x)$.
\State It forwards $\langle Expand, tID \rangle $ message over edges in $Found_L(x)$ and $Found_O(x)$.
\Else \textit{ }($x$ is a low-degree or star node)
\label{lowOrStar}
\State It forwards $\langle Expand, tID \rangle $ message to all of its neighbors.
\EndIf 
\EndIf 
\Else \textit{ } ($xID = tID$)
\State If the sender of $\langle Expand, tID \rangle $ is not $x$'s parent in $T$, it sends back $\langle \textit{Reject-same-tree}, tID \rangle$. Else, it only forwards $\langle Expand, tID \rangle $ to its children, nodes in $Reject(x)$, and over edges in its Found lists.
\label{tochildrenequal}
\EndIf
\State $x$ waits to hear back from all of the nodes it has forwarded to.
\State $x$ removes from the Found lists all the edges it has forwarded $\langle Expand, tID \rangle $ over.\label{emptyFoundlist}
\State $x$ removes from $Reject(x)$ any edge is has forwarded over.
\State $x$ updates its children to be all nodes that it has received $\langle Accept, tID \rangle $ from, in this Expand.
\label{updateBranches}
\State If $x$ is a leader, and receives a $\langle \textit{Rejected-lower-ID}, tID \rangle$, it stops running the FindST-Leader.
\label{stopProto}
\State If $x$ has received at least one $\langle \textit{Rejected-lower-ID}, tID \rangle$, it sends $\langle \textit{Rejected-lower-ID}, tID \rangle$ to its parent. Else, $x$ sends up $\langle Accept, tID \rangle$.
\label{rejConvergecast}
\EndIf
\EndProcedure
\end{algorithmic}
\end{algorithm}

\section{Conclusion}
We presented the first asynchronous algorithm for computing the MST in the CONGEST model with $\tilde{O}(n^{3/2})$ communication when nodes have initial knowledge of their neighbors' identities. This shows that the KT1 model  is significantly more communication efficient than KT0 even in the  asynchronous model. Open problems that are raised by these results are: (1) Does the asynchronous KT1 model require substantially more communication that the synchronous KT1 model?  (2) Can we improve the time complexity of the algorithm while maintaining the message complexity? 

\newpage
\bibliographystyle{spmpsci}      

%
%


\bibliography{mybib}

\end{document}